\newcommand{\newterm}[1]{{\bf #1}}
\def\1{\bm{1}}
\def\va{{\bm{a}}}
\def\vu{{\bm{u}}}
\def\vw{{\bm{w}}}
\def\vx{{\bm{x}}}
\def\mA{{\bm{A}}}
\def\mF{{\bm{F}}}
\def\mI{{\bm{I}}}
\def\mP{{\bm{P}}}
\def\mR{{\bm{R}}}
\def\mU{{\bm{U}}}
\def\mW{{\bm{W}}}
\DeclareMathAlphabet{\mathsfit}{\encodingdefault}{\sfdefault}{m}{sl}
\SetMathAlphabet{\mathsfit}{bold}{\encodingdefault}{\sfdefault}{bx}{n}
\def\sC{{\mathbb{C}}}
\def\sR{{\mathbb{R}}}
\def\sW{{\mathbb{W}}}
\def\sZ{{\mathbb{Z}}}
\DeclareMathOperator*{\argmin}{arg\,min}
\DeclareMathOperator{\Tr}{Tr}
\DeclareMathOperator{\poly}{poly}
\DeclareMathOperator{\polylog}{polylog}
\theoremstyle{plain}
\newtheorem{theorem}{Theorem}[section]
\theoremstyle{definition}
\theoremstyle{remark}
\icmltitlerunning{Quantum Ridgelet Transform: Winning Lottery Ticket of Neural Networks with Quantum Computation}
\begin{document}

\twocolumn[
\icmltitle{Quantum Ridgelet Transform:\\
           Winning Lottery Ticket of Neural Networks with Quantum Computation}



\icmlsetsymbol{equal}{*}

\begin{icmlauthorlist}
\icmlauthor{Hayata Yamasaki}{ut}
\icmlauthor{Sathyawageeswar Subramanian}{warwick}
\icmlauthor{Satoshi Hayakawa}{oxford}
\icmlauthor{Sho Sonoda}{riken}
\end{icmlauthorlist}

\icmlaffiliation{ut}{The University of Tokyo}
\icmlaffiliation{warwick}{University of Warwick}
\icmlaffiliation{oxford}{University of Oxford}
\icmlaffiliation{riken}{RIKEN AIP}

\icmlcorrespondingauthor{Hayata Yamasaki}{hayata.yamasaki@phys.s.u-tokyo.ac.jp}

\icmlkeywords{quantum machine learning, lottery ticket hypothesis, neural network, quantum algorithm, ridgelet transform}

\vskip 0.3in
]



\printAffiliationsAndNotice{}  

\begin{abstract}
    A significant challenge in the field of quantum machine learning (QML) is to establish applications of quantum computation to accelerate common tasks in machine learning such as those for neural networks. Ridgelet transform has been a fundamental mathematical tool in the theoretical studies of neural networks, but the practical applicability of ridgelet transform to conducting learning tasks was limited since its numerical implementation by conventional classical computation requires an exponential runtime $\exp(O(D))$ as data dimension $D$ increases. To address this problem, we develop a quantum ridgelet transform (QRT), which implements the ridgelet transform of a quantum state within a linear runtime $O(D)$ of quantum computation. As an application, we also show that one can use QRT as a fundamental subroutine for QML to efficiently find a sparse trainable subnetwork of large shallow wide neural networks without conducting large-scale optimization of the original network. This application discovers an efficient way in this regime to demonstrate the lottery ticket hypothesis on finding such a sparse trainable neural network. These results open an avenue of QML for accelerating learning tasks with commonly used classical neural networks.
\end{abstract}

\section{Introduction}

Quantum machine learning (QML) is an emerging field of research to take advantage of quantum computation for accelerating machine-learning tasks~\citep{biamonte2017quantum,doi:10.1098/rspa.2017.0551,schuld2021machine}.
Quantum computation can achieve significant speedups compared to the best existing algorithms with conventional classical computation in solving various computational tasks~\cite{N4,arXiv:1907.09415}, such as Shor's algorithm for integer factorization~\citep{doi:10.1137/S0097539795293172}.
QML indeed has advantages in learning data obtained from quantum states~\citep{Sweke2021quantumversus,PhysRevLett.126.190505,doi:10.1126/science.abn7293,9719827}, yet machine learning commonly deals with classical data rather than quantum states.
For a classical dataset constructed carefully so that its classification reduces to a variant of Shor's algorithm, QML achieves the classification superpolynomially faster than classical algorithms~\citep{Yunchao2020}; however, the applicability of such QML to practical datasets has been unknown.
Meanwhile, motivated by the success of neural networks~\citep{Goodfellow-et-al-2016}, various attempts have been made to apply quantum computation to more practical tasks for neural networks.
For example, one widely studied approach in QML is to use parameterized quantum circuits, often called ``quantum neural networks'', as a potential substitute for conventional classical neural networks; however, problematically, the parameterized quantum circuits do not successfully emulate essential components of the neural networks, e.g., perceptrons and nonlinear activation functions, due to linearity of the transformation implemented by the quantum circuits~\citep{schuld2021machine}.
Thus, a significant challenge in QML has been to develop a novel technique to bridge the gap between quantum computation and classical neural networks, so as to clarify what advantage QML could offer on top of the empirically proven merit of the classical neural networks.

To address this challenge, we here develop a fundamental quantum algorithm for making the tasks for classical neural networks more efficient, based on ridgelet transform.
Ridgelet transform, one of the well-studied integral transforms in signal processing, is a fundamental mathematical tool for studying neural networks in the over-parameterized regime~\citep{MURATA1996947,candes,Rubin1998,starck_murtagh_fadili_2010,pmlr-v130-sonoda21a,pmlr-v162-sonoda22a,sonoda2022universality}.
Let $f:\mathbb{R}^D\to\mathbb{R}$ denote a function with $D$-dimensional input, to
be learned with a neural network. For an activation function
$g:\mathbb{R}\to\mathbb{R}$ such as the rectified linear unit (ReLU), a shallow
feed-forward neural network with a single hidden layer is represented by
$f(\vx)\approx\sum_{n=1}^{N}w_n g(\va_n^\top \vx-b_n)$,
where $N$ is the number of nodes in the hidden layer, and $w_n$ is the weight of the map $g(\va_n^\top \vx-b_n)$ parameterized by $(\va_n,b_n)$ at node $n\in\{1,\ldots,N\}$~\citep{Goodfellow-et-al-2016}.
In the over-parameterized (continuous) limit $N\to\infty$, the representation simplifies into an \newterm{integral representation} of the neural network~\citep{256500,MURATA1996947,candes,SONODA2017233}, i.e.,
\begin{equation} \label{eq:continuous}
    \!f(\vx)=S[w](\vx)\coloneqq\int_{\sR^D\times\sR}\!\!d\va\,db\,w(\va,b)g(\va^{\top}
\vx-b),
\end{equation}
where $(\va,b)$ runs over all possible parameters in the continuous space, and $w:\mathbb{R}^D\times\mathbb{R}\to\mathbb{R}$ at each $(\va,b)$ corresponds to the weight $w_n$ at the node $n$ with parameter $(\va_{n},b_n)=(\va,b)$.
With a ridgelet function $r:\mathbb{R}^D\to\mathbb{R}$ that we appropriately choose corresponding to $g$, the $D$-dimensional \newterm{ridgelet transform} $R[f]$ is defined as an inverse transform of $S[w]$ in~\eqref{eq:continuous}, characterizing a weight $w$ to represent $f$, given by
\begin{equation}
    \label{eq:ridgelet}
w(\va,b)=R[f](\va,b)\coloneqq\int_{\sR^D}d\vx\, f(\vx)r(\va^\top \vx-b).
\end{equation}
A wide class of function $f$ is known to be representable as~\eqref{eq:continuous}; moreover, if $g$ and $r$ satisfy a certain admissibility condition, we can reconstruct $f$ from the ridgelet transform of $f$, i.e., $f\propto S[R[f]]$, up to a normalization factor~\citep{SONODA2017233}.
For theoretical analysis, an essential benefit of the integral representation is to simplify the analysis by the linearity; that is, we can regard~\eqref{eq:continuous} as the linear combination of an non-orthogonal over-complete basis of functions, i.e., $\{g(\va^\top \vx-b):(\va,b)\in\sR^D\times\sR\}$,
with weight $w(\va,b)$ given by the ridgelet transform of $f$.

Progressing beyond using the ridgelet transform for theoretical analysis, our key idea is to study its use for conducting tasks for neural networks.
However, $D$-dimensional ridgelet transform has been computationally hard to use in practice since the existing algorithms for ridgelet transform with conventional classical computation require $\exp(O(D))$ runtime as $D$ increases~\citep{1187351,CARRE20042165,4011958,10.1007/978-3-319-11179-7_68}.
After all, the $D$-dimensional ridgelet transform is a transform of $D$-dimensional functions in an $\exp(O(D))$-size space (see Sec.~\ref{sec:ridgelet} for detail), and classical algorithms for such transforms conventionally need $\exp(O(D))$ runtime; e.g., fast Fourier transform may be a more established transform algorithm but still needs $O(n\log(n))=\exp(O(D))$ runtime for the space of size $n=\exp(O(D))$.
To solve these problems, we discover that we can employ quantum computation.
Our results are as follows.
\begin{enumerate}
    \item (Sec.~\ref{sec:ridgelet}) To make exact implementation of ridgelet transform possible for computer with a finite number of bits and quantum bits (qubits), we formulate a new discretized version of ridgelet transform, which we call \newterm{discrete ridgelet transform}. We prove that our discretized ridgelet transform can be used for exactly representing any function on the discretized domain.
    \item (Sec.~\ref{sec:quantum_ridgelet_transform})  We develop a quantum algorithm to apply the $D$-dimensional discrete ridgelet transform to a quantum state of $O(D)$ qubits, i.e., a state in an $\exp(O(D))$-dimensional space, only within  linear runtime $O(D)$. We call this quantum algorithm \newterm{quantum ridgelet transform (QRT)}.
    QRT is exponentially faster in $D$ than the $\exp(O(D))$ runtime of the best existing classical algorithm for ridgelet transform in the $\exp(O(D))$-size space, in the same spirit as \newterm{quantum Fourier transform (QFT)}~\citep{https://doi.org/10.48550/arxiv.quant-ph/0201067} being exponentially faster than the corresponding classical algorithm of fast Fourier transform.
\item (Sec.~\ref{sec:application}) As an application, we demonstrate that we can use QRT to learn a sparse representation of an unknown function $f$ by sampling a subnetwork of a shallow wide neural network to approximate $f$ well. We analytically show the advantageous cases of our algorithm and also conduct a numerical simulation to support the advantage. This application is important as a demonstration of the \newterm{lottery ticket hypothesis}~\citep{frankle2018the}, as explained in the following.
\end{enumerate}

\paragraph{Contribution to QML with neural networks}

State-of-the-art neural networks have billions of parameters to attain high learning accuracy, but such large-scale networks may be problematic for practical use, e.g., with mobile devices and embedded systems.
Pruning techniques for neural networks gain growing importance in learning with neural networks.
The lottery ticket hypothesis by~\citet{frankle2018the} claims that, in such a large-scale neural network, one can find a sparse trainable subnetwork.
However, it is computationally demanding to search for the appropriate subnetwork in the large-scale neural network.

To apply QML to this pruning problem, our idea is to use QRT for preparing a quantum state so that, by measuring the state, we can sample the parameters of the important nodes for the subnetwork with high probability.
To make this algorithm efficient, we never store all parameters of the large original neural network in classical memory but represent them by the amplitude of the quantum state prepared directly from given data.
Conventionally, quantum computation can use QFT to achieve superpolynomial speedups over classical computation for various search problems~\cite{365701,doi:10.1137/S0097539795293172,doi:10.1137/S0097539796300921,9996892}.
By contrast, we make quantum computation applicable to searching in the parameter space of neural networks, by developing QRT to be used in place of QFT\@.

Consequently, our results show that QRT can be used as a fundamental subroutine for QML to accelerate the tasks for the classical neural networks.
A potential drawback may be that our current technique is designed simply for the shallow neural networks with a single hidden layer; however, studies of shallow neural networks capture various essential features of neural networks.
We leave the generalization to deep neural networks for future research, but our development opens a promising route in this direction.
More specifically, if one develops, e.g., a theory to generalize ridgelet transform to the analysis of deep neural networks, our results are expected to offer fundamental techniques for obtaining similar applications to deep neural networks. The significance of our results is to bridge the gap between a theoretical development in analyzing neural networks based on ridgelet transform and the algorithmic techniques in quantum computation for accelerating the machine learning tasks on top of the theoretical development. The essential benefit of ridgelet transform is to provide a closed formula for the weights of the nodes in the hidden layer of a shallow neural network to represent the function to be learned in the over-parameterized (continuous) limit. For deep neural networks, such a closed formula may still be unknown at the moment, but in recent years, significant theoretical progress has been made in the analysis of deep neural networks in the over-parameterized regime. Given such progress, our contributions on the quantum side, if combined with further development in the theoretical analysis of deep neural networks, open a fundamental way to explore the applications in this direction.

\section{\label{sec:ridgelet}Discrete Ridgelet Transform}

\subsection{\label{sec:discrete_ridgelet_transform}Formulation of Discrete Ridgelet Transform}

In this section, we formulate the \newterm{discrete ridgelet transform}.
Then, we also derive a \newterm{Fourier slice theorem} that characterizes our discrete ridgelet transform using Fourier transform.
Although multiple definitions of discrete versions of ridgelet transform have been proposed, none of them has such Fourier expression~\citep{1187351,CARRE20042165,4011958}.
By contrast, the significance of the Fourier slice theorem is that it makes the ridgelet analysis tractable with the well-established techniques for the Fourier transform, which we will use in Sec.~\ref{sec:quantum_ridgelet_transform} for constructing the quantum algorithm as well.

Our formulation assumes the following.
\begin{itemize}
    \item Since computers using a finite number of bits and qubits cannot exactly deal with real number, we use a finite set $\sZ_P\coloneqq\{0,1,\ldots,P-1\}$ in place of the set of real number $\sR$, where $P$ is a precision parameter representing the cardinality of $\sZ_P$. This is conventional in data representation; e.g., for a gray scale image, we may use $8$ bits $\{0,1,\ldots,2^8-1\}$ to represent the intensity of each pixel. In our setting, we can make $P$ larger to achieve better precision in the data representation; e.g., for a more precise representation of the gray scale image, we may use $16$ bits $\{0,1,\ldots,2^{16}-1\}$ in place of the $8$ bits for each pixel. For this improvement of the precision, we may normalize the data by rescaling while the intervals in the discretization are fixed to $1$ for simplicity of presentation. When we write a sum, $\vx$, $\va$, and $\vu$ run over $\sZ_P^D$, and $y$, $b$, and $v$ over $\sZ_P$ unless specified otherwise. Let $\mathcal{F}_D$ and $\mathcal{F}_1$ denote the $D$-dimensional and
    $1$-dimensional discrete Fourier transforms on $\sZ_P^D$ and $\sZ_P$, respectively, i.e.,
    \begin{align}
        \mathcal{F}_D[f](\vu)&\coloneqq
        P^{-\frac{D}{2}}\sum_{\vx}f(\vx)\mathrm{e}^{\frac{-2\pi\mathrm{i}\vu^\top
        \vx}{P}},\\
        \mathcal{F}_1[g](v)&\coloneqq
        P^{-\frac{1}{2}}\sum_{b}r(b)\mathrm{e}^{\frac{-2\pi\mathrm{i}v
        b}{P}}.
    \end{align}
    \item We assume in the following that $P$ is taken as a prime number, considering $\sZ_P$ as a finite field. This is not a restrictive assumption in achieving the better precision since we can take an arbitrarily large prime number as $P$. For example, the maximum of $32$-bit signed integers $P=2^{31}-1$ can be used.
    \item An activation function $g:\sR\to\sR$ is assumed to be normalized as
    \begin{equation}
            \label{eq:g_not_zero}
            \sum_{b}g(b)=0,\quad\|g\|_2^2\coloneqq\sum_{b}|g(b)|^2=1.
    \end{equation}
    Indeed, any non-constant function, such as ReLU, can be used as $g$ with normalization by adding and multiplying appropriate constants.
    \item Corresponding to $g$, we choose a ridgelet function $r:\sR\to\sR$ in such a way that $g$ and $r$ satisfy an admissibility condition
        \begin{equation}
            \label{eq:addmissible}
            C_{g,r}\coloneqq \sum_{v}\mathcal{F}_1[g](v)\overline{\mathcal{F}_1[r](v)}\neq
            0,
    \end{equation}
    where $\overline{\cdots}$ denotes the complex conjugate. We
    also normalize $r$ as $\|r\|_2^2=1$. Note that it is conventional to choose
    $r=g$, leading to $C_{g,g}=\|\mathcal{F}_1[g]\|_2^2=\|g\|_2^2=1$, while our
    setting allows any non-unique choice of $r$ satisfying~\eqref{eq:addmissible} in general.
\end{itemize}

Then, by replacing the integral over $\sR$ in $S$ of~\eqref{eq:continuous} and $R$ of~\eqref{eq:ridgelet} with the finite sum over $\sZ_P$, we correspondingly define the \newterm{discretized neural network} and the \newterm{discrete ridgelet transform} of $f:\sR^D\to\sR$ as, respectively,
\begin{align}
    \label{eq:S}
&\mathcal{S}[w](\vx)\coloneqq P^{-\frac{D}{2}}\sum_{\va,b}w(\va,b)g((\va^\top
\vx-b)\bmod P),\\
    \label{eq:discrete_ridgelet}
&\mathcal{R}[f](\va,b)\coloneqq P^{-\frac{D}{2}}\sum_{\vx}f(\vx)r((\va^\top \vx-b)\bmod P),
\end{align}
where $P^{-\frac{D}{2}}$ represents a normalization constant.
Note that the discretized neural network $\mathcal{S}[w](\vx)$ has discrete parameters $(\va,b)\in\sZ_P^D\times\sZ_P$ for nodes in the hidden layer, but the weights $w(\va,b)\in\sR$ of the nodes are real numbers.

The following theorem, \newterm{Fourier slice theorem}, characterizes the discrete ridgelet transform $R[f]$ in terms of Fourier transforms of $f$ and $h$.
In the case of continuous ridgelet transform, ridgelet analysis can be seen as a form of wavelet analysis in the Radon domain, which combines wavelet and Radon transforms~\citep{Fadili2012}.
The Fourier slice theorem for the continuous ridgelet transform follows from that for continuous Radon transform.
However, problematically, discrete versions of the Radon transforms are nonunique and usually more involved, not necessarily having exact expression in terms of
discrete Fourier transform~\citep{1165108,236530,GOTZ1996711,doi:10.1137/S0097539793256673,862638,doi:10.1137/S003613999732425X,doi:10.1073/pnas.0609228103}.
As a result, the existing definitions of discrete versions of ridgelet transform have no such Fourier expression either~\citep{1187351,CARRE20042165,4011958}.
By contrast, we here show that our formulation of the discrete ridgelet transform $\mathcal{R}[f]$ has the following exact characterization in the Fourier transform.
See Appendix~\ref{sec:A} for proof.

\begin{theorem}[\label{thm:fourier_slice_theorem}Fourier slice theorem for discrete ridgelet transform]
    For any function $f:\sR^D\to\sR$ and any point $\va\in\sZ_P^D,v\in\sZ_P$ in the discretized space, it holds that
    \begin{equation}
        \mathcal{F}_1[\mathcal{R}[f](\va,\cdot)](v)=\mathcal{F}_D[f](v \va\bmod
        P)\;\overline{\mathcal{F}_1[r](v)}.
    \end{equation}
\end{theorem}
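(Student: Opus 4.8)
The plan is to expand both transforms on the left-hand side by their definitions and collapse the result to the right-hand side through a single change of summation variable. First I would write out, using the definition~\eqref{eq:discrete_ridgelet},
\[
\mathcal{F}_1[\mathcal{R}[f](\va,\cdot)](v)=P^{-\frac{1}{2}}\sum_b\mathcal{R}[f](\va,b)\,\mathrm{e}^{\frac{-2\pi\mathrm{i}vb}{P}}=P^{-\frac{D+1}{2}}\sum_b\sum_{\vx}f(\vx)\,r((\va^\top\vx-b)\bmod P)\,\mathrm{e}^{\frac{-2\pi\mathrm{i}vb}{P}}.
\]
I would then exchange the two finite sums (legitimate since both are finite) and, for each fixed $\vx$, introduce the new variable $c\coloneqq(\va^\top\vx-b)\bmod P$. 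Because $b\mapsto(\va^\top\vx-b)\bmod P$ is a bijection of $\sZ_P$ onto itself for every fixed $\vx$ (it is a reflection-and-shift modulo $P$), summing over $b\in\sZ_P$ is identical to summing over $c\in\sZ_P$, with $b\equiv\va^\top\vx-c\pmod P$.

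The next step is to factor the exponential. Since $v$, $\va$, and $\vx$ are integers and $t\mapsto\mathrm{e}^{-2\pi\mathrm{i}vt/P}$ is $P$-periodic, every ``$\bmod P$'' inside the exponent may be dropped, giving $\mathrm{e}^{\frac{-2\pi\mathrm{i}vb}{P}}=\mathrm{e}^{\frac{-2\pi\mathrm{i}(v\va)^\top\vx}{P}}\,\mathrm{e}^{\frac{2\pi\mathrm{i}vc}{P}}$. The double sum then separates into a product of an $\vx$-sum and a $c$-sum,
\[
P^{-\frac{D+1}{2}}\left(\sum_{\vx}f(\vx)\,\mathrm{e}^{\frac{-2\pi\mathrm{i}(v\va)^\top\vx}{P}}\right)\left(\sum_c r(c)\,\mathrm{e}^{\frac{2\pi\mathrm{i}vc}{P}}\right).
\]

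Finally I would recognize the two brackets as the desired Fourier transforms, again invoking integer periodicity to replace the frequency $v\va$ by $v\va\bmod P$ in the argument of $\mathcal{F}_D$: the first bracket equals $P^{\frac{D}{2}}\mathcal{F}_D[f](v\va\bmod P)$, and the second, because $r$ is real-valued so that $\overline{r(c)}=r(c)$, equals $P^{\frac{1}{2}}\overline{\mathcal{F}_1[r](v)}$. Multiplying the three normalization powers $P^{-\frac{D+1}{2}}\cdot P^{\frac{D}{2}}\cdot P^{\frac{1}{2}}=1$ yields exactly the claimed identity. There is no serious obstacle here; the only points demanding care are verifying that $b\mapsto c$ is a genuine bijection of $\sZ_P$ (so the wrap-around inside $r$ does not distort the sum) and correctly tracking the conjugation together with the periodicity that turns the $c$-sum into $\overline{\mathcal{F}_1[r](v)}$ rather than $\mathcal{F}_1[r](v)$. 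I would also note that primality of $P$ plays no role in this statement---only the additive group structure of $\sZ_P$ is used---so the theorem holds for arbitrary $P$.
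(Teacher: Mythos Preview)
Your proof is correct and follows essentially the same approach as the paper's: both are direct computations that exploit the convolution-like structure of $\mathcal{R}[f](\va,\cdot)$ in $b$ so that the one-dimensional Fourier transform factorizes. The only cosmetic difference is that the paper expands $r$ via its inverse Fourier representation before reading off the result, whereas you perform the equivalent step by the substitution $c=(\va^\top\vx-b)\bmod P$; your remark that primality of $P$ is not used here is also correct (the paper invokes it only in Theorem~\ref{thm:ridgelet}).
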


\subsection{\label{sec:exact_reconstruction}Exact Representation of Functions as Neural Networks}

Using the discrete ridgelet transform in Sec.~\ref{sec:discrete_ridgelet_transform}, we here show that any function $f$ on the discretized domain has an \newterm{exact representation} in terms of a shallow neural
network with a finite number of parameters in the discretized space.
In the continuous case, any square-integrable function $f$ is represented as the continuous limit of the shallow neural networks, i.e., $f=S[w]$ in~\eqref{eq:continuous}, with the weight given by the ridgelet transform $w\propto R[f]$~\citep{SONODA2017233}.
With discretization, it is nontrivial to show such an exact representation due to finite precision in discretizing the real number.
Nevertheless, we here show that any function $f(\vx)$ for $\vx\in\sZ_P^D$ can be exactly represented as $f(\vx)=\mathcal{S}[w](\vx)$ as well, with the weight given by our formulation of the discrete ridgelet transform $w\propto\mathcal{R}[f]$, which we may call exact representation as a discretized neural network.

In particular, the following theorem shows that any $D$-dimensional real function $f$ on the discrete domain can be exactly represented as a linear combination of non-orthogonal basis functions $\{g((\va^\top\vx-b)\bmod P):(\va,b)\in\sZ_P^D\times\sZ_P\}$ with coefficients given by $\mathcal{R}[f]$.
Due to the non-orthogonality, the coefficients may not be unique, and different choices of the ridgelet function $r$ in~\eqref{eq:discrete_ridgelet} lead to different $\mathcal{R}[f]$
while any of the choices can exactly reconstruct $f$.
The proof is based on the Fourier slice theorem in Theorem~\ref{thm:fourier_slice_theorem}, crucially using the existence of an inverse element for each element of a finite field $\sZ_P$; thus, it is essential to assume that $P$ is a prime number.
See Appendix~\ref{sec:B} for proof.

\begin{theorem}[\label{thm:ridgelet}Exact representation of function as discretized neural network]
    For any function $f:\sR^D\to\sR$ and any point $\vx\in\sZ_P^D$ in the discretized domain, we have \begin{equation}
        f(\vx)=C_{g,r}^{-1}\;\mathcal{S}[\mathcal{R}[f]](\vx),
    \end{equation}
    where $C_{g,r}$ is a constant defined in~\eqref{eq:addmissible}.
\end{theorem}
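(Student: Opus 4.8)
The plan is to expand $\mathcal{S}[\mathcal{R}[f]](\vx)$ directly from the definitions~\eqref{eq:S} and~\eqref{eq:discrete_ridgelet}, recognize the inner summation over $b$ as a circular convolution on $\sZ_P$, and then pass to the Fourier domain so that the Fourier slice theorem (Theorem~\ref{thm:fourier_slice_theorem}) applies termwise. Substituting $w=\mathcal{R}[f]$ into $\mathcal{S}$, I would observe that for each fixed $\va$ the quantity $\sum_b \mathcal{R}[f](\va,b)\,g((\va^\top\vx-b)\bmod P)$ is exactly the circular convolution $(\mathcal{R}[f](\va,\cdot)*g)$ evaluated at $\va^\top\vx\bmod P$. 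By the convolution theorem for $\mathcal{F}_1$ together with the inverse one-dimensional DFT, this inner sum equals $\sum_v \mathcal{F}_1[\mathcal{R}[f](\va,\cdot)](v)\,\mathcal{F}_1[g](v)\,\mathrm{e}^{2\pi\mathrm{i}v(\va^\top\vx)/P}$, the normalization factors $P^{\pm 1/2}$ cancelling cleanly. Invoking Theorem~\ref{thm:fourier_slice_theorem} to replace $\mathcal{F}_1[\mathcal{R}[f](\va,\cdot)](v)$ by $\mathcal{F}_D[f](v\va\bmod P)\,\overline{\mathcal{F}_1[r](v)}$ then yields
\begin{equation*}
\mathcal{S}[\mathcal{R}[f]](\vx)=P^{-\frac{D}{2}}\sum_{\va,v}\mathcal{F}_D[f](v\va\bmod P)\,\overline{\mathcal{F}_1[r](v)}\,\mathcal{F}_1[g](v)\,\mathrm{e}^{\frac{2\pi\mathrm{i}v\va^\top\vx}{P}}.
\end{equation*}

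The decisive step is to evaluate the inner sum over $\va$ for each fixed $v$. The $v=0$ term vanishes because the normalization $\sum_b g(b)=0$ forces $\mathcal{F}_1[g](0)=0$. For each $v\neq 0$, I would use that $P$ is prime, so $\sZ_P$ is a field and multiplication by $v$ is a bijection of $\sZ_P^D$; substituting $\vu=v\va\bmod P$ and noting $v\va^\top\vx\equiv\vu^\top\vx\pmod P$ turns $\sum_\va \mathcal{F}_D[f](v\va\bmod P)\,\mathrm{e}^{2\pi\mathrm{i}v\va^\top\vx/P}$ into $\sum_\vu \mathcal{F}_D[f](\vu)\,\mathrm{e}^{2\pi\mathrm{i}\vu^\top\vx/P}=P^{\frac{D}{2}}f(\vx)$ by the inverse $D$-dimensional DFT. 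Crucially this value is independent of $v$, so summing the remaining factor $\mathcal{F}_1[g](v)\,\overline{\mathcal{F}_1[r](v)}$ over $v\neq 0$ (equivalently over all $v$, since the $v=0$ term already vanishes) reproduces exactly the admissibility constant $C_{g,r}$ of~\eqref{eq:addmissible}. The surviving $P^{\pm D/2}$ factors cancel, giving $\mathcal{S}[\mathcal{R}[f]](\vx)=C_{g,r}\,f(\vx)$, which rearranges to the claimed identity.

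The main obstacle is the change-of-variables step, which is precisely where primality of $P$ is indispensable: for composite $P$ the map $\va\mapsto v\va$ would fail to be a bijection whenever $v$ shares a factor with $P$, and the inner sum would no longer collapse to $f(\vx)$. Care is also needed to confirm that the $v=0$ contribution genuinely drops out, relying on $\sum_b g(b)=0$, and to track that the various $P^{\pm 1/2}$ normalizations arising from the convolution theorem, the two inverse DFTs, and the prefactors in the definitions of $\mathcal{S}$ and $\mathcal{R}$ combine to leave no stray powers of $P$.
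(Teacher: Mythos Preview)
Your proposal is correct and follows essentially the same route as the paper's proof: rewrite the inner $b$-sum as a circular convolution, pass to the Fourier side, invoke the Fourier slice theorem, drop the $v=0$ term via $\mathcal{F}_1[g](0)=0$, and use primality of $P$ to reindex $\va\mapsto v\va$ so that the $\va$-sum becomes the inverse $D$-dimensional DFT of $\mathcal{F}_D[f]$ and the remaining $v$-sum yields $C_{g,r}$. The only cosmetic difference is that the paper states the bijection as ``$\{0,v,\ldots,(P-1)v\}=\sZ_P$'' rather than your substitution $\vu=v\va$, but the content is identical.
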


\section{\label{sec:quantum_ridgelet_transform}Quantum Ridgelet Transform}

In this section, we introduce \newterm{quantum ridgelet transform (QRT)}, an efficient quantum algorithm for implementing the discrete ridgelet transform formulated in Sec.~\ref{sec:ridgelet}.
In various quantum algorithms, we may use quantum Fourier transform (QFT) as a fundamental subroutine.
In addition to QFT, various discrete transforms can be efficiently implemented with quantum computation, such as wavelet transform~\citep{https://doi.org/10.48550/arxiv.quant-ph/9702028,10.5555/645812.670803,938692,10.5555/2011791.2011796,Taha2016,LI2019113,9337176}, Radon transform~\citep{9648027}, fractional Walsh transform~\citep{938691}, Hartley transform~\citep{1328767}, and curvelet transform~\citep{10.1145/1536414.1536469}.
However, the existing discrete versions of ridgelet transform~\citep{1187351,CARRE20042165,4011958} were lacking implementation by quantum computation.
In contrast, our QRT opens a way to use the discrete ridgelet transform as a fundamental subroutine for QML to deal with tasks for classical neural networks.

Basic notions and notations of quantum computation to describe our quantum algorithms are summarized in Appendix~\ref{sec:C}.
In classical computation, we may use $\lceil\log_2(P)\rceil$ bits for representing $\sZ_P$, where $\lceil x\rceil$ denotes the ceiling function, i.e., the smallest integer
that is not smaller than $x$.
The quantum algorithm uses a $\lceil\log_2(P)\rceil$-qubit quantum register for $\sZ_P$.
This quantum register for $\sZ_P$ is represented as a
$2^{\lceil\log_2(P)\rceil}$-dimensional complex vector space $\mathcal{H}_P\coloneqq{(\mathbb{C}^2)}^{\otimes \lceil\log_2(P)\rceil}$.
Using the conventional bra-ket notation, each state in the standard orthonormal basis of the registers is written as a ket (i.e., a vector) $\ket{\vx}\in\mathcal{H}_P^{\otimes D}$ for representing $\vx\in\sZ_P^D$ and $\ket{\va,b}\coloneqq\ket{\va}\otimes\ket{b}\in\mathcal{H}_P^{\otimes D}\otimes\mathcal{H}_P$ for $(\va,b)\in\sZ_P^D\times\sZ_P$, respectively.

The task of QRT is to transform a given unknown quantum state
$\ket{\psi}=\sum_{\vx}\psi(\vx)\ket{\vx}$ into $\mR\ket{\psi}=\sum_{\va,b}\mathcal{R}[\psi](\va,b)\ket{\va,b}$, where $\mR$ is a matrix given by
\begin{align}
    \label{eq:ridgelet_state}
    &\mR\coloneqq P^{-\frac{D}{2}}\sum_{\vx,\va,b}r((\va^\top \vx-b)\bmod
    P)\ket{\va,b}\bra{\vx}.
\end{align}
Under the assumptions in Sec.~\ref{sec:discrete_ridgelet_transform}, $\mR$ becomes an isometry matrix and can be implemented by a quantum circuit for our quantum algorithm, as shown below.
Along with the assumptions in Sec.~\ref{sec:discrete_ridgelet_transform}, we use the following assumption to bound the runtime of our algorithm.
\begin{itemize}
    \item We choose the ridgelet function $r:\sR\to\sR$ in such a way that a
        quantum state representing $r$ by its amplitude $\ket{r}\coloneqq\sum_{y}r(y)\ket{y}$
        can be prepared efficiently in runtime $O(\polylog(P))$. This assumption is
        not restrictive since we can use the quantum algorithm
        by~\citet{https://doi.org/10.48550/arxiv.quant-ph/0208112} to meet the assumption for
        representative choices of $r$ that are integrable efficiently, such as
        ReLU and tanh.
\end{itemize}

Algorithm~\ref{alg:qrt} shows our quantum algorithm for QRT\@.
We construct the algorithm by implementing the discrete Fourier transform in the Fourier slice theorem (Theorem~\ref{thm:ridgelet}) by QFT\@.
QFT applies a unitary matrix representing discrete Fourier transform
\begin{equation}
    \label{eq:QFT}
    \mF_P\coloneqq\sum_{v,b}P^{-\frac{1}{2}}\mathrm{e}^{\frac{-2\pi\mathrm{i}vb}{P}}\ket{v}\bra{b}
\end{equation}
to a given quantum state of $\mathcal{H}_P$ within runtime $O(\polylog(P))$~\citep{doi:10.1142/S0219749904000109}.
The following theorem shows that this speedup is also the case in QRT compared to classical algorithms for computing ridgelet transform.
See Appendix~\ref{sec:C} for proof.

\begin{theorem}[\label{thm:runtime_qrt}Runtime of quantum ridgelet transform]
The runtime of QRT in Algorithm~\ref{alg:qrt} is
\begin{equation}
    O(D\times\polylog(P)).
\end{equation}
\end{theorem}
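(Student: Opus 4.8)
The plan is to read off the runtime of Algorithm~\ref{alg:qrt} stage by stage, using the fact that the total runtime is the sum of the costs of a constant number of elementary subroutines. The algorithm realizes the isometry $\mR$ of~\eqref{eq:ridgelet_state} by implementing the Fourier slice theorem (Theorem~\ref{thm:fourier_slice_theorem}): starting from $\ket{\psi}=\sum_{\vx}\psi(\vx)\ket{\vx}$, one applies the $D$-dimensional QFT to obtain $\sum_{\vu}\mathcal{F}_D[\psi](\vu)\ket{\vu}$, adjoins a register carrying $\sum_v\overline{\mathcal{F}_1[r](v)}\ket{v}$, maps $\ket{\vu}\ket{v}\mapsto\ket{v^{-1}\vu\bmod P}\ket{v}$ so that the first register is relabeled by $\va$ with $\vu=v\va$, and finally applies a one-dimensional QFT on the $v$-register to turn $v$ into $b$; by Theorem~\ref{thm:fourier_slice_theorem} the resulting amplitude at $\ket{\va,b}$ is exactly $\mathcal{R}[\psi](\va,b)$. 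It therefore suffices to bound the cost of each stage.

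The QFT stages are standard. A single one-dimensional QFT $\mF_P$ on a $\lceil\log_2(P)\rceil$-qubit register runs in $O(\polylog(P))$ time~\citep{doi:10.1142/S0219749904000109}, and the $D$-dimensional QFT $\mathcal{F}_D$ is just $D$ copies of $\mF_P$ applied to the $D$ coordinate registers, costing $O(D\times\polylog(P))$. Preparing the ridgelet register is $O(\polylog(P))$: by our assumption the state $\ket{r}=\sum_y r(y)\ket{y}$ can be prepared in $O(\polylog(P))$, and one further one-dimensional QFT (with the appropriate sign) turns it into $\sum_v\overline{\mathcal{F}_1[r](v)}\ket{v}$. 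The final one-dimensional QFT on the $v$-register again costs $O(\polylog(P))$. Since only a constant number of such one-dimensional QFTs and state preparations appear, their combined cost is $O(D\times\polylog(P))$.

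The step that requires the most care, and the one I expect to be the main obstacle, is the coordinate-wise modular rescaling $\ket{\vu}\ket{v}\mapsto\ket{v^{-1}\vu\bmod P}\ket{v}$. Here the assumption that $P$ is prime is essential: for every $v\neq 0$ the inverse $v^{-1}$ exists in the field $\sZ_P$ and can be computed reversibly in $O(\polylog(P))$ (e.g.\ via the extended Euclidean algorithm or Fermat's little theorem), after which reversible modular multiplication of each of the $D$ coordinate registers by $v^{-1}$ costs $O(\polylog(P))$ per register, hence $O(D\times\polylog(P))$ in total; the degenerate slice $v=0$, on which $v^{-1}$ is undefined, must be handled separately, and one must be careful to uncompute any scratch ancillas used by the modular arithmetic so that the net operation is exactly the isometry $\mR$ rather than $\mR$ entangled with garbage. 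Granting that this stage is implemented reversibly in $O(D\times\polylog(P))$, summing the costs of all stages --- each either $O(\polylog(P))$ or $O(D\times\polylog(P))$, with only constantly many of them --- yields the claimed total runtime $O(D\times\polylog(P))$, mirroring the exponential speedup of QFT over the classical fast Fourier transform.
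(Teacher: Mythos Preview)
Your proposal is correct and follows essentially the same step-by-step analysis as the paper's proof: both bound the cost of preparing the ridgelet register, the $D$-dimensional QFT, the controlled modular rescaling, and the final one-dimensional (inverse) QFT, and conclude that the $D$-fold QFT dominates at $O(D\times\polylog(P))$. If anything, you are slightly more careful than the paper on two points---you correctly cost the coordinate-wise modular multiplication as $O(D\times\polylog(P))$ (the paper states $O(\polylog(P))$ for this step, which does not change the final bound), and you explicitly flag the $v=0$ branch and the need to uncompute ancillas, which the paper handles only implicitly.
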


The advantage of QRT is its linear runtime $O(D)$ in the data dimension $D$, which is exponentially faster than the best existing classical algorithm for ridgelet transform in the $\exp(O(D))$-size space requiring $\exp(O(D))$ runtime.
This advantage is in the same spirit as the QFT being exponentially faster than the corresponding classical algorithm for fast Fourier transform in the spaces of the same size.
In Sec.~\ref{sec:application}, we will further clarify that QRT has an application to accelerate the task of finding the winning ticket of neural networks.

\begin{algorithm}[t]
  \caption{\label{alg:qrt}Quantum ridgelet transform (QRT).}
  \begin{algorithmic}[1]
      \REQUIRE{A given input state $\ket{\psi}=\sum_{\vx}\psi(\vx)\Ket{\vx}$, the ridgelet function $r$ satisfying the assumptions in Sec.~\ref{sec:quantum_ridgelet_transform}.}
      \ENSURE{Output $\mR\ket{\psi}=\sum_{\va,b}\mathcal{R}[\psi](\va,b)\ket{\va,b}$ in~\eqref{eq:ridgelet_state} within runtime $O(D\times\polylog(P))$ as in Theorem~\ref{thm:runtime_qrt}.}
      \STATE{Given $\ket{\psi}$, add an auxiliary register $\mathcal{H}_P$ prepared in $\sum_{b}r(b)\ket{b}$, to obtain $\sum_{\vx,b}\psi(\vx)\ket{\vx}\otimes r(b)\ket{b}$.
  }
    \STATE{Apply $D$-dimentional QFT $\mF_P^{\otimes D}$ and $1$-dimensional inverse QFT $\mF_P^\dag$ to the first and second quantum registers, respectively, to transform $\sum_{\vx,b}\psi(\vx)\ket{\vx}\otimes r(b)\ket{b}$ into
    \begin{align}
        &\sum_{\va^\prime\in\sZ_P^D}\sum_{v\in\sZ_P}\mathcal{F}_D[\psi](\va^\prime)\ket{\va^\prime}\otimes \overline{\mathcal{F}_1[r](v)}\ket{v}.
    \end{align}
}
    \STATE{Perform arithmetics $\ket{\va^\prime}\mapsto\ket{v^{-1}\va^\prime\bmod P}$ on the first register by controlled gates that are controlled by the state $\ket{v}$ of the second, to obtain
    \begin{align}
        &\hspace{-1em}\sum_{v\in\mathbb{Z}_P\setminus\{0\}}\sum_{\va^\prime\in\sZ_P^D}\mathcal{F}_D[\psi](\va^\prime)\Ket{v^{-1}\va^\prime\bmod P}\otimes\overline{\mathcal{F}_1[r](v)}\ket{v}\nonumber\\
        &\hspace{-1em}=\sum_{\va,v}\mathcal{F}_D[\psi](v \va\bmod P)\overline{\mathcal{F}_1[r](v)}\ket{\va}\otimes\Ket{v},
    \end{align}
    where $v^{-1}\in\mathbb{Z}_P$ is the inverse of $v$ in the finite field $\mathbb{Z}_P$, $\va\coloneqq v^{-1}\va^\prime$, and $\bmod P$ for $\sZ_P^D$ is taken element-wise.}
    \STATE{Apply inverse QFT $\mF_P^\dag$ to the second quantum register, which yields $\mR\ket{\psi}$ due to Theorem~\ref{thm:fourier_slice_theorem}.}
    \STATE{\textbf{Return } $\mR\ket{\psi}$.}
  \end{algorithmic}
\end{algorithm}

\section{\label{sec:application}Application of Quantum Ridgelet Transform to Lottery Ticket Hypothesis}

\subsection{\label{sec:setting}Setting for Winning Ticket of Neural Networks}

In this section, as an application of quantum ridgelet transform (QRT) in Sec.~\ref{sec:quantum_ridgelet_transform},
we propose an algorithm for finding a sparse subnetwork approximating a large neural network efficiently by quantum computation, based on the \newterm{lottery ticket hypothesis} on neural networks.
The lottery ticket hypothesis by~\citet{frankle2018the} claims that a randomly-initialized fully-connected neural network contains a subnetwork that is initialized in such a way that, when trained in isolation, it can match the accuracy of the original network after training for at most the same number of iterations.
This hypothesis has been confirmed numerically in various settings.
The theoretical analysis of deep neural networks is inevitably hard in general, and studies of shallow neural networks are also important for capturing essences of neural networks, which we here consider in a setting of regression from given data as described in the following.

For $D\in\{1,2,\ldots\}$ with a fixed prime number $P$, we consider a family of problems to approximate an unknown function $f^{(D)}:\sR^D\to\sR$ by a shallow neural network, i.e.,
\begin{equation}
\label{eq:neural_network}
    \hat{f}^{(D)}(\vx)\coloneqq\sum_{n=1}^{N}w_n g((\va_n^\top \vx-b_n)\bmod P).
\end{equation}
Let $p^{(D)}_\mathrm{data}$ be a probability mass function for the input data, which is assumed to be supported on $\sZ_P^D$.
Suppose that we are given $M$ input-output pairs of examples $(\vx_1,y_1=f(\vx_1)),\ldots,(\vx_{M},y_{M}=f(\vx_{M}))\in\sZ_P^D\times\sR$.
Let $\hat{p}^{(D)}_\mathrm{data}$ denote the empirical distribution of $\vx_1,\ldots,\vx_M$.
Given $\epsilon>0$, we will analyze empirical risk minimization~\citep{bach2021learning}, i.e., minimization of the empirical risk $\sum_{\vx} \hat{p}^{(D)}_\mathrm{data}(\vx)|f^{(D)}(\vx)-\hat{f}^{(D)}(\vx)|^2$ to $O(\epsilon)$.
If obvious, we may omit $D$ in superscripts; e.g., we may write $f^{(D)}$ as $f$.

The setting of our analysis, along with the assumptions in Sec.~\ref{sec:quantum_ridgelet_transform}, is as follows.
Based on the exact representation of  $f(\vx)$ in terms of the neural network $\mathcal{S}[w](\vx)$ in Theorem~\ref{thm:ridgelet},
we can approximate $f$ by a neural network
\begin{equation}
\label{eq:f_approximation}
f(\vx)\!\approx\!\mathcal{S}[w_\lambda^\ast](\vx)\!=\!\!\sum_{\va,b}P^{-\frac{D}{2}}w_\lambda^\ast(\va,b)g((\va^\top \vx-b)\bmod P),
\end{equation}
where $w_\lambda^\ast$ is the optimal solution of the ridge regression with the empirical distribution, i.e.,
\begin{align}
    \label{eq:minimization}
    &w_\lambda^\ast(\va,b)\coloneqq\argmin_{w}\{\tilde{J}(w)\},
\end{align}
$\tilde{J}(w)\coloneqq J(w)+\lambda\Omega(w)$,
$J(w)\coloneqq\sum_{\vx}\hat{p}_\mathrm{data}(\vx)|f(\vx)-\mathcal{S}[w](\vx)|^2$,
$\Omega(w)\coloneqq\|P^{-\frac{D}{2}}w\|_2^2=\sum_{\va,b}|P^{-\frac{D}{2}}w(\va,b)|^2$,
and $\lambda>0$ is a hyperparameter for regularization.
Learning a general class of function $f^{(D)}$ on $\sZ_P^D$ would be inevitably demanding as its representation would require $\exp(O(D))$ parameters to specify the values $f^{(D)}(\vx)$ for all $\vx\in\sZ_P^D$ in the worst case.
We have shown such a general representation in Theorem~\ref{thm:ridgelet}.
By contrast, our goal here is to achieve the approximation feasibly with much fewer parameters, using a subnetwork of the large original network $\mathcal{S}[w_\lambda^\ast](\vx)$.

To this goal, recall that it is conventional in statistical learning theory to consider a reasonably restricted class of functions, e.g., those with bounded norms~\citep{bach2021learning}; correspondingly, we work on a setting where the norm $\|P^{-\frac{D}{2}}w_\lambda^\ast\|_1\coloneqq\sum_{\va,b}|P^{-\frac{D}{2}}w_\lambda^\ast(\va,b)|$ of the weights for representing $f^{(D)}$ should be bounded even on the large scales $D\to\infty$.
In particular, let $((\va_j,b_j)\in\sZ_P^D\times\sZ_P:j\in\{1,\ldots,P^{D+1}\})$ denote a sequence of parameters of all nodes in the hidden layer of $\mathcal{S}[w_\lambda^\ast](\vx)$ aligned in the descending order of $w_\lambda^\ast$, i.e., $|w_\lambda^\ast(\va_1,b_1)|\geqq|w_\lambda^\ast(\va_2,b_2)|\geqq\cdots$.
These nodes of $\mathcal{S}[w_\lambda^\ast]$ are ordered in the same way; i.e., the weight of the $j$th node is $w_\lambda^\ast(\va_j,b_j)$.
Then, we assume the following.
\begin{itemize}
    \item Following the convention of assumptions in the previous works by, e.g.,~\citet{donoho1993unconditional,hayakawa2020minimax}, we assume that there exist constants $\alpha,\beta>0$ such that it holds uniformly for any $D$ and $j$ that
\begin{equation}
    \label{eq:decay}
    |P^{-\frac{D}{2}}w_\lambda^\ast(\va_j,b_j)|\leqq\alpha j^{-(1+\beta)},
\end{equation}
which specifies the decay rate for large $j$, leading to $\|P^{-\frac{D}{2}}w_\lambda^\ast\|_1\leqq\sum_{j=1}^{\infty}\alpha j^{-(1+\beta)}\leqq\alpha+\int_1^\infty\frac{\alpha}{x^{1+\beta}}dx=\alpha+\frac{\alpha}{\beta}<\infty$.
We also write the $L^2$ norm as
\begin{equation}
\label{eq:gamma}
    \gamma\coloneqq\|P^{-\frac{D}{2}}w_\lambda^\ast\|_2^2,
\end{equation}
which is upper bounded by
$\gamma\leqq\sum_{j=1}^{\infty}\alpha^2 j^{-2(1+\beta)}\leqq\alpha^2+\int_1^\infty\frac{\alpha^2}{x^{2(1+\beta)}}dx=\alpha^2+\frac{\alpha^2}{1+2\beta}$.
Functions $(f^{(D)}:D=1,2,\ldots)$ satisfying~\eqref{eq:decay} are called $(\alpha,\beta)$-class functions, which are to be learned in our setting.
\end{itemize}

For given $\epsilon>0$,
our analysis focuses on the task of finding a sparse representation $\hat{f}$ to approximate $\mathcal{S}[w_\lambda^\ast]$ up to $\epsilon$, i.e.,
\begin{equation}
    \label{eq:S_hat_f_error}
    \sum_{\vx}\hat{p}_\mathrm{data}(\vx)|\mathcal{S}[w_\lambda^\ast](\vx)-\hat{f}(\vx)|^2=O(\epsilon),
\end{equation}
with keeping the number of nodes $N$ in the hidden layer of $\hat{f}$ in~\eqref{eq:neural_network} as small as possible.
Following the conventional prescription in the statistical learning theory~\citep{bach2021learning}, we assume that we appropriately choose
\begin{equation}
    \label{eq:lambda}
\lambda\approx\poly(\epsilon),
\end{equation}
so that~\eqref{eq:S_hat_f_error} leads to $\sum_{\vx}\hat{p}_\mathrm{data}(\vx)|f(\vx)-\hat{f}(\vx)|^2=O(\epsilon)$, achieving the empirical risk minimization with $\hat{f}$.
Note that if we fix $D$, then for any $f$, we may be able to find sufficiently large $\alpha$ and small $\beta$ to meet~\eqref{eq:decay}, but our analysis will show that assuming smaller $\alpha$ and larger $\beta$ guarantees smaller $N$ to achieve~\eqref{eq:S_hat_f_error} for the $(\alpha,\beta)$-class functions for arbitrary $D$.

\subsection{\label{sec:winning}Quantum Algorithm for Sampling from Optimized Probability Distribution}

We here construct a quantum algorithm for sampling from an \newterm{optimized probability distribution} (defined later as $p_{\lambda,\Delta}^\ast(\va,b)$ in~\eqref{eq:optimized_distribution}) of parameters of nodes in the hidden layer of the large original network $\mathcal{S}[w_\lambda^\ast]$ in~\eqref{eq:f_approximation}, which we can use for efficiently finding a sparse subnetwork of $\mathcal{S}[w_\lambda^\ast]$ to approximate $f$ well.
The original network $\mathcal{S}[w_\lambda^\ast]$ has $\exp(O(D))$ nodes to represent any function $f$, as with Theorem~\ref{thm:ridgelet}.
By contrast, studies on the lottery ticket hypothesis provide numerical evidences that $f$ in practice can usually be approximated by a sparse subnetwork with much fewer parameters.
One existing way to find such a subnetwork is to train the overall large network and then perform masking to eliminate the low-weight nodes while keeping those with higher weights~\citep{frankle2018the}.
However, this approach is inefficient since one needs large-scale optimization to train the large original network before the pruning.
Then, more recent studies by~\citet{lee2018snip,NEURIPS2019_1113d7a7,Ramanujan_2020_CVPR,pmlr-v119-malach20a,NEURIPS2020_1e949147,NEURIPS2020_1b742ae2,NEURIPS2020_46a4378f,NEURIPS2020_ad1f8bb9,Wang2020Picking,Wang_Zhang_Xie_Zhou_Su_Zhang_Hu_2020,frankle2021pruning,chen2022peekaboo} have suggested that one should be able to find the subnetwork only by pruning the initial network directly, even without the optimization for training.
Still, to perform this pruning appropriately, one needs to perform a large-scale search for the subnetwork within the parameter space of the large original neural network.
As $D$ increases, it would become infeasible to deal with the large original network for training or searching as long as we use the existing methods based on classical computation.

To address this problem, our key idea is to represent the weights of the $\exp(O(D))$ nodes in the hidden layer of the neural network $\mathcal{S}[w_\lambda^\ast]$ efficiently as the amplitude of quantum state of only $O(D)$ qubits.
Roughly speaking, as in Theorem~\ref{thm:ridgelet}, these weights can be given by the discrete ridgelet transform $\mathcal{R}[f]$ of $f$, which is implementable efficiently by QRT\@.
In particular, if we initially have a quantum state $\ket{f}=\sum_{\vx}f(\vx)\ket{\vx}$, then QRT of $\ket{f}$ can prepare $\mR\ket{f}=\sum_{\va,b}\mathcal{R}[f](\va,b)\ket{\va,b}$ in time $\widetilde{O}(D)$ as shown in Theorem~\ref{thm:runtime_qrt}, where $\widetilde{O}$ may ignore polylogarithmic factors.
A measurement of this quantum state $\mR\ket{f}$ in basis $\{\ket{\va,b}\}$ provides a measurement outcome $(\va,b)$ sampled from a probability distribution proportional to the square of the amplitude, i.e., $|\mathcal{R}[f](\va,b)|^2$.
In this way, we can find parameter $(\va,b)$ for a node with large $|\mathcal{R}[f](\va,b)|^2$ with high probability, in runtime $\widetilde{O}(D)$ per sampling.
The state is corrupted by the measurement, and to perform the sampling $N$ times, we repeat the preparation and measurement $N$ times.
To sample $(\va,b)$ for all high-weight nodes with high probability in a theoretically guaranteed way,
for $\Delta>0$,
we introduce an \newterm{optimized probability distribution}
\begin{equation}
    \label{eq:optimized_distribution}
    p_{\lambda,\Delta}^\ast(\va,b)\coloneqq\frac{1}{Z}\frac{|P^{-\frac{D}{2}}w_\lambda^\ast(\va,b)|^2}{|P^{-\frac{D}{2}}w_\lambda^\ast(\va,b)|^2+\Delta},
\end{equation}
where $Z$ is a constant for normalization $\sum_{\va,b}p_{\lambda,\Delta}^\ast(\va,b)=1$.
Appropriate $\Delta$ for our task in~\eqref{eq:S_hat_f_error} will be specified later in Theorem~\ref{thm:winning}.
To sample from $p_{\lambda,\Delta}^\ast$, we prepare
\begin{equation}
    \ket{p_{\lambda,\Delta}^\ast}\coloneqq\frac{1}{\sqrt{Z}}\sum_{\va,b}\frac{P^{-\frac{D}{2}}w_\lambda^\ast(\va,b)}{\sqrt{|P^{-\frac{D}{2}}w_\lambda^\ast(\va,b)|^2+\Delta}}\ket{\va,b},
\end{equation}
followed by performing the measurement of $\ket{p_{\lambda,\Delta}^\ast}$ in the same way as described above.

To apply the above idea to practical tasks of machine learning, it is important to deal with a conventional situation where the data (e.g., examples of input-output pairs of $f$) are given by classical bit strings rather than quantum states; thus, a critical issue for our algorithm should be how to give such a quantum state from classical data.
To achieve the overall task using QRT, we first need to input the classical data by converting the data into a quantum state, then apply QRT to the quantum state, and finally perform a measurement to obtain a classical output from the quantum state.
We also remark that, in some other proposals of QML, some ``quantum'' data may be assumed to be given by quantum states, e.g., as a result of another quantum algorithm or physical process, and QRT is also potentially useful in such a quantum setting.
But significantly, our algorithm here avoids such an assumption by explicitly clarifying how to prepare an input quantum state from the given classical examples in the task of finding the winning ticket of neural networks, as shown below.

In particular, we explicitly construct an input model for our quantum algorithm by quantum circuit as follows.
\begin{itemize}
    \item As an input, our algorithm uses quantum circuits to prepare $\ket{\hat{p}_\mathrm{data}}=\sum_{\vx}\sqrt{\hat{p}_\mathrm{data}(\vx)}\ket{\vx}$ and $\ket{\psi_\mathrm{in}}\propto\sum_{\vx}\hat{p}_\mathrm{data}(\vx)f(\vx)\ket{\vx}$.
Regarding $\ket{\hat{p}_\mathrm{data}}$,
if we prepare $\ket{\hat{p}_\mathrm{data}}$ and  measure it in basis $\{\ket{\vx}\}$,
we can randomly sample $\vx$ according to the empirical distribution $\hat{p}_\mathrm{data}(\vx)$.
For a classical algorithm, sampling from $\hat{p}_\mathrm{data}$ can be easily realized in time $O(D\polylog(M))$ over $M$ examples of $D$-dimensional input, by sampling $m\in\{1,\ldots,M\}$ from the uniform distribution over $O(\log(M))$ bits and outputting $\vx_m\in\sZ_P^D$ out of $\vx_1,\ldots,\vx_{M}$ stored in random access memory (RAM).
As for the quantum algorithm, the preparation of $\ket{\hat{p}_\mathrm{data}}$ with maintaining quantum superposition may be more technical.
But we show that this preparation is also implementable in time $O(D \polylog(M))$, by storing the $M$ examples upon collecting them in a sparse binary-tree data structure~\citep{kerenidis_et_al:LIPIcs:2017:8154} with quantum RAM (QRAM)~\citep{PhysRevA.78.052310,PhysRevLett.100.160501}, where QRAM is implemented explicitly as a parallelized quantum circuit of depth $O(\polylog(M))$~\citep{8962352,PRXQuantum.2.020311}.
Using the same data structure, we also show that the preparation of $\ket{\psi_\mathrm{in}}$ is implemented within the same runtime $O(D \polylog(M))$.
As a whole, our assumption is to store the $M$ examples in these data structures upon collecting them, so that each preparation of $\ket{\hat{p}_\mathrm{data}}$ and $\ket{\psi_\mathrm{in}}$ has runtime
$O(D \polylog(M))=\widetilde{O}(D)$.
See Appendix~\ref{sec:D} for detail.
\end{itemize}

The following theorem shows that we have a quantum algorithm that can prepare and measure $\ket{p_{\lambda,\Delta}^\ast}$ to sample from $p_{\lambda,\Delta}^\ast$ within a linear runtime $\widetilde{O}(\frac{D}{\lambda\Delta})$.
Our algorithm is based on the analytical formula for the solution of ridge regression in~\eqref{eq:minimization}; in particular, with $r=g$, the formula leads to
\begin{equation}
\label{eq:state_W}
    \ket{p_{\lambda,\Delta}^\ast}\propto{\Big(\mW_\lambda+\frac{\Delta}{\gamma}\mI\Big)}^{-\frac{1}{2}}{\Big(\mR\hat{\mP}_\mathrm{data}\mR^\top+\lambda\mI\Big)}^{-1}\mR\ket{\psi_\mathrm{in}},
\end{equation}
where $\mW_\lambda\coloneqq\gamma^{-1}\sum_{\va,b}|P^{-\frac{D}{2}}w_\lambda^\ast(\va,b)|^2\ket{\va,b}\bra{\va,b}$ and $\hat{\mP}_\mathrm{data}\coloneqq\sum_{\vx}\hat{p}_\mathrm{data}(\vx)\ket{\vx}\bra{\vx}$.
The algorithm is also explicitly presented as Algorithm~\ref{alg:sampling} in Appendix~\ref{sec:D}.
See Appendix~\ref{sec:D} for proof of its runtime as well.

\begin{theorem}[\label{thm:quantum_sampling}Runtime of quantum algorithm for sampling from optimized probability distribution]
    Given $\lambda,\Delta>0$, for any $D$, a quantum algorithm can prepare and measure $\ket{p_{\lambda,\Delta}^\ast}$ to sample from $p_{\lambda,\Delta}^\ast(\va,b)$ within runtime $\widetilde{O}\left(\frac{D}{\lambda\Delta}\times\gamma\right)$ per sampling, where $\gamma$ is a constant in~\eqref{eq:gamma}.
\end{theorem}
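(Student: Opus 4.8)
The plan is to realize $\ket{p_{\lambda,\Delta}^\ast}$ through its closed form~\eqref{eq:state_W}, implementing each matrix factor as a block-encoding, composing them on $\mR\ket{\psi_\mathrm{in}}$, and extracting the normalized target by amplitude amplification before measuring to sample $(\va,b)$. First I would justify~\eqref{eq:state_W} itself. Writing $\mathcal{S}=\mR^\top$ (valid since $r=g$ makes the synthesis map the adjoint of $\mathcal{R}$, i.e. $(\mR^\top w)(\vx)=\mathcal{S}[w](\vx)$) and $\hat{\mP}_\mathrm{data}=\sum_{\vx}\hat{p}_\mathrm{data}(\vx)\ket{\vx}\bra{\vx}$, the objective $\tilde{J}(w)=J(w)+\lambda\Omega(w)$ is quadratic with $J(w)=\|\hat{\mP}_\mathrm{data}^{1/2}(f-\mR^\top w)\|^2$ and $\Omega(w)=\|P^{-\frac{D}{2}}w\|_2^2$, so its normal equations give $P^{-\frac{D}{2}}w_\lambda^\ast\propto(\mR\hat{\mP}_\mathrm{data}\mR^\top+\lambda\mI)^{-1}\mR\ket{\psi_\mathrm{in}}$ up to fixed rescalings absorbed into the normalization of $\ket{\psi_\mathrm{in}}$ and $\lambda$. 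Diagonally reweighting this amplitude vector by $(\mW_\lambda+\tfrac{\Delta}{\gamma}\mI)^{-\frac12}$, whose $(\va,b)$-entry is $\sqrt{\gamma}\,(|P^{-\frac{D}{2}}w_\lambda^\ast(\va,b)|^2+\Delta)^{-\frac12}$, reproduces exactly the amplitudes defining $\ket{p_{\lambda,\Delta}^\ast}$, establishing~\eqref{eq:state_W}.

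For the implementation I would proceed factor by factor. The input states $\ket{\hat{p}_\mathrm{data}}$ and $\ket{\psi_\mathrm{in}}$ are prepared in $\widetilde{O}(D)$ from the QRAM data structure, and $\mR$, $\mR^\top$ are applied in $\widetilde{O}(D)$ by Theorem~\ref{thm:runtime_qrt}. Since $\hat{\mP}_\mathrm{data}$ is diagonal with entries in $[0,1]$, conjugating its block-encoding by the QRT isometry and adding $\lambda\mI$ by a linear combination of unitaries yields a block-encoding of $\mR\hat{\mP}_\mathrm{data}\mR^\top+\lambda\mI$ whose spectrum lies in $[\lambda,1+\lambda]$; quantum singular value transformation then applies its inverse with $\widetilde{O}(1/\lambda)$ uses of the block-encoding, producing a coherent copy of the ridge solution state $\ket{c}\propto\sum_{\va,b}P^{-\frac{D}{2}}w_\lambda^\ast(\va,b)\ket{\va,b}$. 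The diagonal matrix $\mW_\lambda$ is exactly $\mathrm{diag}(|\langle\va,b|c\rangle|^2)$ for normalized $\ket{c}$, so I would block-encode it as the dephased density matrix obtained by copying the index register of $\ket{c}$ with CNOTs and treating the copy as a flag; adding $\tfrac{\Delta}{\gamma}\mI$ gives a block-encoding of $\mW_\lambda+\tfrac{\Delta}{\gamma}\mI$ with spectrum in $[\tfrac{\Delta}{\gamma},1+\tfrac{\Delta}{\gamma}]$, whose inverse square root is applied by a second singular value transformation of condition number $O(\gamma/\Delta)$.

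Composing the two transformations while keeping everything coherent (subnormalized) rather than renormalizing intermediate states, and amplitude-amplifying once at the end, gives $\ket{p_{\lambda,\Delta}^\ast}$. Morally the three factors arise as: $\widetilde{O}(D)$ per query from QRT and state preparation; $\widetilde{O}(1/\lambda)$ from the condition number of the inner inversion; and $\widetilde{O}(\gamma/\Delta)$ from the condition number of the outer inverse square root, which must recompute the inner block-encoding (hence $\ket{c}$, hence the inversion) inside each of its queries. Multiplying the inner $\widetilde{O}(D/\lambda)$ cost by the outer $\widetilde{O}(\gamma/\Delta)$ query count yields the claimed $\widetilde{O}\!\left(\frac{D\gamma}{\lambda\Delta}\right)$.

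The main obstacle is concentrated at the $\mW_\lambda$ block-encoding and the runtime bookkeeping around it. Its diagonal entries are squared amplitudes of a state we can only prepare coherently and never read out, so $\mW_\lambda$ must be built as the purified/dephased density matrix of the solution state rather than from explicit entries; this is precisely what forces the inner inversion to recur inside every outer query, producing the product $\tfrac{1}{\lambda}\cdot\tfrac{\gamma}{\Delta}$ rather than a sum. Because the intermediate solution state is only available subnormalized, the density matrix, the shift $\tfrac{\Delta}{\gamma}\mI$, and the spectra feeding both singular value transformations are all rescaled by data-dependent subnormalization constants; the delicate part is tracking these constants and scheduling a single final amplitude amplification so the overhead collapses to exactly $\widetilde{O}(\tfrac{D\gamma}{\lambda\Delta})$ with the first powers of $\gamma$ and of $1/\lambda$ surviving, while the polynomial-approximation errors of both transformations propagate through the composition so that the sampled distribution stays $\epsilon$-close to $p_{\lambda,\Delta}^\ast$.
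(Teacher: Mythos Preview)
Your proposal is correct and follows essentially the same route as the paper's proof: block-encode $\mR\hat{\mP}_\mathrm{data}\mR^\top$ and $\mW_\lambda$ as purified density operators (the latter via CNOT-copying the ridge-solution state), add $\lambda\mI$ and $\tfrac{\Delta}{\gamma}\mI$ by LCU, apply the matrix inverse and inverse square root by QSVT, and observe that the outer transformation must re-run the inner one at every query, giving the multiplicative $\widetilde{O}(D/\lambda)\cdot\widetilde{O}(\gamma/\Delta)$ cost. The one substantive difference is in the amplification scheduling: the paper normalizes after \emph{each} QSVT step using variable-time amplitude amplification, which is what guarantees the linear-in-$\kappa$ query count $\widetilde{O}(1/\lambda)$ and $\widetilde{O}(\gamma/\Delta)$ at each stage, whereas you keep everything subnormalized and defer to a single final amplification, correctly flagging the resulting constant-tracking as the delicate part. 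Both schedules can reach the stated bound, but the paper's per-step VTAA makes the $\widetilde{O}(\kappa)$ scaling explicit and sidesteps the data-dependent rescaling of $\mW_\lambda$ that you would otherwise have to control.
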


Remarkably, our construction of the quantum algorithm for Theorem~\ref{thm:quantum_sampling} is based on two significant technical contributions.
First,
estimation of classical description of $\ket{p_{\lambda,\Delta}^\ast}$ would need $\exp(O(D))$ runtime and may cancel out the advantage of QML~\citep{aaronson2015read}, but we avoid such slowdown.
In particular, our quantum algorithm prepares $\ket{p_{\lambda,\Delta}^\ast}$ directly from the $M$ examples and then measure it to obtain parameter $(\va,b)$ for a high-weight node of $\mathcal{S}[w_\lambda^\ast]$ per single preparation and measurement.
In this way, we circumvent the costly process of expectation-value estimation throughout our algorithm.
Second, we develop a technique for implementing the inverses of $\exp(O(D))\times\exp(O(D))$ matrices in~\eqref{eq:state_W} with quantum computation efficiently, yet without imposing restrictive assumptions.
In particular, the inverses of $\exp(O(D))\times\exp(O(D))$ matrices are hard to compute in classical computation, and conventional techniques in QML have required sparsity or low-rankness of the matrices to implement matrix inversion with large quantum speedups~\citep{10.1145/3313276.3316366}.
More recent quantum-inspired classical algorithms also require the low-rank assumption~\citep{10.1145/3313276.3316310}.
However, the matrices to be inverted in our algorithm are not necessarily sparse or low-rank, and thus imposing such assumptions would limit the applicability of QML\@.
By contrast, we avoid imposing these assumptions by directly clarifying the quantum circuits for implementing these matrices efficiently with QRT\@.
In the existing research, this type of technique for avoiding the sparsity and low-rankness assumptions in QML was established only for Fourier transform~\citep{NEURIPS2020_9ddb9dd5,https://doi.org/10.48550/arxiv.2106.09028}.
Our development discovers wide applicability of such techniques even to a broader class of transforms including ridgelet transform.

\subsection{\label{sec:finding}Quantum Algorithm for Finding Winning Ticket of Neural Networks and Performance Analysis}

Using the quantum algorithm for sampling from $p_{\lambda,\Delta}^\ast(\va,b)$ in Theorem~\ref{thm:quantum_sampling}, we describe \newterm{an algorithm for finding a winning ticket}, i.e., a sparse trainable subnetwork of the large original network $\mathcal{S}[w_\lambda^\ast]$ in~\eqref{eq:f_approximation} for approximating $f$.
We also analyze its performance with theoretical guarantee.

We here describe our algorithm for finding a winning ticket, which is also explicitly presented as Algorithm~\ref{alg:winning} in Appendix~\ref{sec:E}.
In our algorithm, we repeat the sampling from  $p_{\lambda,\Delta}^\ast(\va,b)$ in total $N$ times by the quantum algorithm in Theorem~\ref{thm:quantum_sampling}, where $N$ is given later in~\eqref{eq:Delta_bound}.
Letting  $\hat{\sW}$ denote the set of sampled parameters in these $N$ repetitions, we approximate $\mathcal{S}[w_\lambda^\ast]$ by the subnetwork
\begin{equation}
    \label{eq:sparse_subnetwork}
    \!\!\!\mathcal{S}[w_\lambda^\ast]\!\approx\!\hat{f}(\vx)\!=\!\!\!\!\!\!\sum_{(\va,b)\in\hat{\sW}}\hat{w}^\ast(\va,b)g((\va^\top \vx-b)\bmod P),
\end{equation}
where we write $\hat{\vw}^\ast\coloneqq(\hat{w}^\ast(\va,b)\in\sR:(\va,b)\in\hat{\sW})$.
Each sampling provides parameter $(\va,b)\in\hat{\sW}$ of each node in the hidden layer of this subnetwork but not the value of $\hat{w}^\ast(\va,b)$.
Once we fix $\{g((\va^\top \vx-b)\bmod P):(\va,b)\in\hat{\sW}\}$ of the subnetwork, we then train $\hat{\vw}^\ast$ efficiently by the established classical algorithms for convex optimization such as stochastic gradient descent (SGD)~\citep{pmlr-v99-harvey19a}, using the $M$ examples.
Thus, the quantum speedup is not cancelled out throughout the learning including the training of $\hat{\vw}^\ast$.
In this way, we achieve our task~\eqref{eq:S_hat_f_error} with trainability.

The following theorem guarantees $\Delta$ and $N$ required for achieving our task~\eqref{eq:S_hat_f_error}.
By combining Theorems~\ref{thm:quantum_sampling} and~\ref{thm:winning}
with~\eqref{eq:lambda} in our setting,
the overall runtime of our algorithm is
\begin{equation}
    \widetilde{O}\Big(N\times \frac{D}{\lambda\Delta}\gamma\Big)=\widetilde{O}\Big(\frac{D}{\lambda\epsilon^{1+\nicefrac{2}{\beta}}}\Big)=\widetilde{O}\Big(D\times\poly\Big(\frac{1}{\epsilon}\Big)\Big),
\end{equation}
dominated by the $N$ repetitions of the sampling in Theorem~\ref{thm:quantum_sampling}.
A comparison with classical algorithms analogous to our sampling-based approach is made in Sec.~\ref{sec:advantage}.
See Appendix~\ref{sec:E} for proof.

\begin{theorem}[\label{thm:winning}Bounds for finding winning ticket of neural networks]
Given $\epsilon,\delta>0$,
there exist $\Delta$ and $N$ satisfying
\begin{align}
\label{eq:Delta_bound}
\Delta=\Omega\left(\epsilon^{1+\frac{1}{\beta}}\right),\;N=O\left( \epsilon^{-\frac{1}{2\beta}}\log\left(\epsilon^{-1}\delta^{-1}\right)\right),
\end{align}
such that the algorithm described above returns a subnetwork $\hat{f}$ of the neural network $\mathcal{S}[w_\lambda^\ast]$ with the number of nodes in the hidden layer of $\hat{f}$ smaller than $N$, and $\hat{f}$ achieves the task of approximating $\mathcal{S}[w_\lambda^\ast]$ to $O(\epsilon)$ in~\eqref{eq:S_hat_f_error} with high probability greater than $1-\delta$.
\end{theorem}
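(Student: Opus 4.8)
The plan is to decouple the argument into a deterministic approximation-theoretic estimate and a probabilistic coverage estimate, linked by the observation that the trained subnetwork performs at least as well as an empirical orthogonal projection. Write $\tilde{w}(\va,b)\coloneqq P^{-\frac{D}{2}}w_\lambda^\ast(\va,b)$ and $g_{\va,b}(\vx)\coloneqq g((\va^\top\vx-b)\bmod P)$, and equip functions on the support of $\hat{p}_\mathrm{data}$ with the empirical seminorm $\|h\|^2\coloneqq\sum_{\vx}\hat{p}_\mathrm{data}(\vx)|h(\vx)|^2$. Since $\hat{\vw}^\ast$ minimizes the empirical risk over all weightings supported on the sampled set $\hat{\sW}$, the output $\hat f$ is no worse than the empirical orthogonal projection of $\mathcal{S}[w_\lambda^\ast]$ onto $\mathrm{span}\{g_{\va,b}:(\va,b)\in\hat{\sW}\}$. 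Fixing a threshold $\tau\coloneqq\sqrt{\Delta}$, I split the nodes into a heavy set $H\coloneqq\{(\va,b):|\tilde{w}(\va,b)|\geq\tau\}$ and its complement. The key reduction is that \emph{whenever $\hat{\sW}\supseteq H$}, the heavy part $\sum_{(\va,b)\in H}\tilde{w}(\va,b)g_{\va,b}$ lies in the span, so the projection residual is at most the light tail,
\begin{equation}
\|\mathcal{S}[w_\lambda^\ast]-\hat f\|^2\leq\Big\|\sum_{(\va,b)\notin H}\tilde{w}(\va,b)g_{\va,b}\Big\|^2,
\end{equation}
independently of which light nodes happen to be sampled.

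\textbf{Approximation side.} Here I would bound the light tail using only the decay assumption~\eqref{eq:decay}. Since each $g_{\va,b}$ is bounded in the empirical seminorm (the domain $\sZ_P$ is finite, so $\|g_{\va,b}\|=O(1)$), the triangle inequality gives $\|\sum_{(\va,b)\notin H}\tilde{w}g_{\va,b}\|=O(\sum_{(\va,b)\notin H}|\tilde{w}(\va,b)|)$. Using the sorted bound $|\tilde{w}(\va_j,b_j)|\leq\alpha j^{-(1+\beta)}$, every node of rank $j>K$ with $K\coloneqq(\alpha/\tau)^{1/(1+\beta)}$ is light, and summing this tail together with the at-most-$K$ light nodes of smaller rank yields $\sum_{(\va,b)\notin H}|\tilde{w}(\va,b)|=O(\tau^{\beta/(1+\beta)})$. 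Hence the light tail is $O(\tau^{2\beta/(1+\beta)})=O(\Delta^{\beta/(1+\beta)})$, and choosing $\Delta=\Theta(\epsilon^{1+\frac1\beta})$ (the largest admissible choice, matching the stated $\Delta=\Omega(\epsilon^{1+\frac1\beta})$) makes this $O(\epsilon)$ as required by~\eqref{eq:S_hat_f_error}. The same choice gives $|H|\leq K=O(\epsilon^{-\frac{1}{2\beta}})$.

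\textbf{Coverage side.} It remains to show that $N=O(\epsilon^{-\frac1{2\beta}}\log(\epsilon^{-1}\delta^{-1}))$ independent draws from $p_{\lambda,\Delta}^\ast$ contain every heavy node with probability at least $1-\delta$. For a heavy node, $|\tilde{w}|^2\geq\Delta$ forces the ratio $\tfrac{|\tilde{w}|^2}{|\tilde{w}|^2+\Delta}\geq\tfrac12$, so its sampling probability is at least $1/(2Z)$. The crux is to bound the normalization $Z=\sum_{\va,b}\tfrac{|\tilde{w}(\va,b)|^2}{|\tilde{w}(\va,b)|^2+\Delta}$ tightly: bounding each summand by $\min(1,|\tilde{w}|^2/\Delta)$ and inserting the sorted decay, the rank-$\leq K$ terms contribute $O(K)$ and the rank-$>K$ tail also telescopes to $O(K)$ (since $\alpha^2/\Delta=K^{2(1+\beta)}$), giving $Z=O(K)$. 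Consequently each heavy node is drawn with probability $\Omega(1/K)$, the chance it is missed in $N$ draws is at most $e^{-\Omega(N/K)}$, and a union bound over the $\leq K$ heavy nodes makes the failure probability at most $Ke^{-\Omega(N/K)}\leq\delta$ once $N=O(K\log(K/\delta))=O(\epsilon^{-\frac1{2\beta}}\log(\epsilon^{-1}\delta^{-1}))$. On this event $\hat{\sW}\supseteq H$, the approximation bound closes the argument, and the node count is at most $N$ by construction.

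\textbf{Main obstacle.} I expect the delicate point to be the tight estimate $Z=O(K)$ rather than the much weaker $Z=O(\gamma/\Delta)$: it is exactly this that produces the sample size $N\sim\epsilon^{-1/(2\beta)}$ instead of $N\sim\epsilon^{-(1+1/\beta)}$, and it hinges on the regularized leverage-score form $\tfrac{|\tilde{w}|^2}{|\tilde{w}|^2+\Delta}$ of $p_{\lambda,\Delta}^\ast$, which simultaneously keeps heavy nodes at probability $\Omega(1/K)$ and caps the total mass on the light tail. A secondary point to handle carefully is the reduction of the first paragraph when training is performed against the data $f$ rather than against $\mathcal{S}[w_\lambda^\ast]$; there the choice $\lambda\approx\poly(\epsilon)$ in~\eqref{eq:lambda} together with the near-optimality of $w_\lambda^\ast$ is used to transfer the $O(\epsilon)$ guarantee between the two targets.
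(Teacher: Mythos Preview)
Your proposal is correct and follows essentially the same route as the paper's proof: both split at the threshold $\sqrt{\Delta}$ into a heavy set (the paper's $\sW_{\lambda,\Delta}$, your $H$), bound the normalization $Z=O(K)$ by splitting the sum at rank $K\asymp N_\epsilon\asymp\epsilon^{-1/(2\beta)}$ and using $\min(1,|\tilde w|^2/\Delta)$ on the tail, deduce each heavy node is hit with probability $\Omega(1/K)$ so a union bound gives $N=O(K\log(K/\delta))$, and then control the approximation error on the event $\hat{\sW}\supseteq H$ by the $L^1$ tail of the sorted weights, $\big(\sum_{j>K}\alpha j^{-(1+\beta)}\big)^2\leq\epsilon$. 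Your bookkeeping is in fact slightly more careful than the paper's in one place (you account explicitly for possible light nodes of rank $\leq K$, which the paper's final chain of inequalities silently absorbs), and your observation about the transfer from training against $f$ to the guarantee against $\mathcal{S}[w_\lambda^\ast]$ is a genuine subtlety that the paper's proof handles only by taking an infimum over weights.
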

\subsection{\label{sec:advantage}Advantage of Using Quantum Ridgelet Transform}

\begin{figure}[t]
  \centering
  \includegraphics[width=\columnwidth]{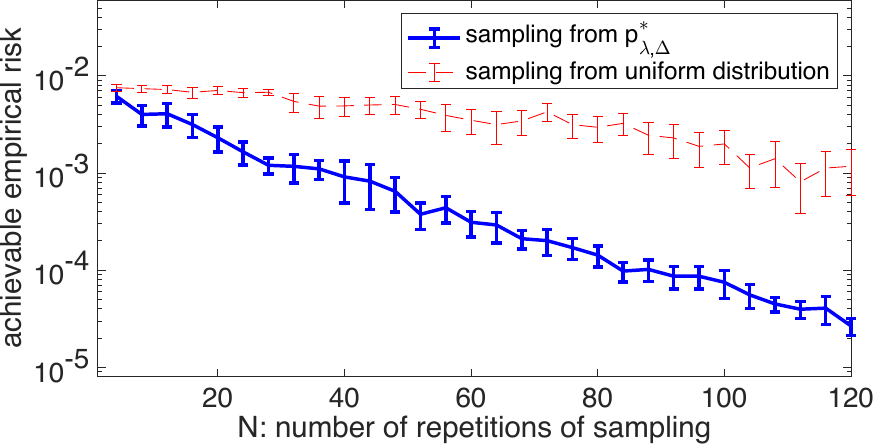}
  \vspace{-5mm}
  \caption{\label{fig:advantage}The empirical risks achievable with the subnetworks of the large original neural network found by $N$ repetitions of sampling from the optimized distribution $p_{\lambda,\Delta}^\ast(\va,b)$ in~\eqref{eq:optimized_distribution} via our algorithm in Theorem~\ref{thm:winning} (blue thick line), and that from the uniform distribution via random features (red dashed line). Each line represents the average over $20$ executions of the algorithms, while each error bar represents the unbiased estimation of the standard deviation for these executions. The advantage of using our algorithm over the simple application of the random features can be order of magnitude in terms of the empirical risk in this regime.}
\end{figure}

We numerically demonstrate \newterm{the advantage of our algorithm for winning the lottery ticket} of the neural network in Theorem~\ref{thm:winning}.
For fair comparison between our algorithm and a similar approach for classical algorithms, recall that the essential idea of our algorithm is to avoid computationally hard optimization of the initial neural network by sampling the nodes in its hidden layer to decide the basis functions $\{g((\va^\top \vx-b)\bmod P):(\va,b)\in\hat{\sW}\}$ in~\eqref{eq:sparse_subnetwork}, followed by efficiently training their coefficients $\boldsymbol{\hat{w}^\ast}$ via convex optimization.
This idea can be regarded as a generalized form of random features by~\citet{R2,R3}, where one randomly samples feature maps, i.e., $\{g((\va^\top \vx-b)\bmod P)\}$, and then find their coefficients by convex optimization.
The difference is that we use the optimized distribution $p_{\lambda,\Delta}^\ast(\va,b)$ depending on $f$ and $\hat{p}_\mathrm{data}$, but the random features conventionally performs sampling from a distribution independent of $f$ and $\hat{p}_\mathrm{data}$, e.g., a uniform distribution $p_{\mathrm{uniform}}(\va,b)\coloneqq\frac{1}{P^{D+1}}$.
Note that a more recent work by~\citet{B1} has proposed to sample optimized random features depending on the data distribution (but still not on $f$ itself), and this sampling is also efficiently achievable with a quantum algorithm shown by~\citet{NEURIPS2020_9ddb9dd5,https://doi.org/10.48550/arxiv.2106.09028}; however, without QRT in this work, it would not be straightforward to apply such techniques to neural networks.
Despite this difference, a quantitative advantage of our algorithm over the random features would be still unclear without numerical simulation, due to the non-orthogonality of the basis functions $\{g((\va^\top \vx-b)\bmod P)\}$ used for the neural network.

We numerically show that our algorithm can find a subnetwork achieving a significantly better empirical risk than that obtained from the random features, as illustrated in Fig.~\ref{fig:advantage}.
In our numerical experiment, choosing $D=1$ and $P=127$, we set the function $f$ to be learned as a sine function, the empirical distribution as the uniform distribution $\hat{p}_\mathrm{data}(\vx)=\frac{1}{P^D}$, and the activation function $g$ and the ridgelet function $r$ as ReLU\@.
The sampling from $p_{\lambda,\Delta}^\ast$ was classically simulated via rejection sampling, and convex optimization of $\hat{w}^\ast$ was solved by MOSEK~\citep{mosek} and YALMIP~\citep{Lofberg2004}.
For $N\leqq 120$, we plotted the achievable empirical risk with $N$ repetitions of the sampling from $p_{\lambda,\Delta}^\ast$ and that from $p_\mathrm{uniform}$.
The advantage of our algorithm in finding a sparse trainable subnetwork over the random features can be order of magnitude in terms of the empirical risk achievable by the subnetwork.
We also note that a similar advantage can also be obtained in the case where we choose the activation function as a sigmoid function (tanh) rather than ReLU, which supports our results further.
See Appendix~\ref{sec:F} for detail.

We emphasize that our classical simulation using rejection sampling of $p_{\lambda,\Delta}^\ast$ is not scalable as $D$ increases; by contrast, our results make it possible to take the advantage in higher dimension $D\gg 1$ if we can use quantum computation for accelerating the task.
Our main contributions are the development of algorithmic techniques for QML and the derivation of the bounds on the performances of our algorithms with theoretical guarantees, which are not heuristic.
We may be able to perform the numerical experiment only for small $D$ at the moment because the numerical experiment for higher dimensions is computationally hard as long as we use classical computation to simulate quantum computation.
However, together with our theoretical analysis, the overall results lay a solid foundation for developing fault-tolerant quantum computers to demonstrate the advantage of our algorithms for the higher dimensions in quantum experiments in the future.

\section{Conclusion}
We have formulated the discrete ridgelet transform that can be characterized via Fourier slice theorem and can represent any function exactly in the discretized domain. Furthermore, as a fundamental subroutine for quantum machine learning (QML), we have constructed quantum ridgelet transform (QRT), a quantum algorithm for applying $D$-dimensional discrete ridgelet transform to a quantum state efficiently in linear time $\widetilde{O}(D)$.
We have also clarified an application of QRT for finding a sparse trainable subnetwork of a large-scale neural network to approximate a function to be learned, opening an efficient way to demonstrate lottery ticket hypothesis.
These results discover a promising use of QML to accelerate the tasks for classical neural networks.
Also from a broader perspective, our quantum algorithms may need a fault-tolerant quantum computer that is actively under development, and our achievement lays a solid theoretical foundation for further hardware development and social implementation toward realizing quantum computation.

\section*{Acknowledgements}
Hayata Yamasaki was supported by JSPS Overseas Research Fellowship, JST PRESTO Grant Number JPMJPR201A and MEXT Quantum Leap Flagship Program (MEXT QLEAP) JPMXS0118069605, JPMXS0120351339\@.
Sathyawageeswar Subramanian was supported by Royal Commission for the Exhibition of 1851 Research Fellowship in Science and Engineering.
Sho Sonoda was supported by JST PRESTO Grant Number JPMJPR2125\@.

\bibliographystyle{icml2023}
\bibliography{citation_bibtex}

\newpage

\appendix
\onecolumn

\section*{Appendices}

Appendices are organized as follows.
In Appendix~\ref{sec:A}, we provide a proof of Theorem~\ref{thm:fourier_slice_theorem} in Sec.~\ref{sec:discrete_ridgelet_transform}.
In Appendix~\ref{sec:B}, we provide a proof of Theorem~\ref{thm:ridgelet} in Sec.~\ref{sec:exact_reconstruction}.
In Appendix~\ref{sec:C}, we summarize basic notions of quantum computation required for the analysis and then provide a proof of Theorem~\ref{thm:runtime_qrt} in Sec.~\ref{sec:quantum_ridgelet_transform}.
In Appendix~\ref{sec:D}, we clarify the explicit implementation and the runtime of the input model for our quantum algorithm for Theorem~\ref{thm:quantum_sampling} in Sec.~\ref{sec:winning} and then provide the proof of Theorem~\ref{thm:quantum_sampling}.
In Appendix~\ref{sec:E}, we provide the proof of Theorem~\ref{thm:winning} in Sec.~\ref{sec:finding}.
In Appendix~\ref{sec:F}, we describe the detail of the parameters chosen for the numerical experiment in Sec.~\ref{sec:advantage}.

\section{\label{sec:A}Proof on Fourier Slice Theorem for Discrete Ridgelet Transform}

We provide a proof of Theorem~\ref{thm:fourier_slice_theorem} in Sec.~\ref{sec:discrete_ridgelet_transform}.

\begin{proof}[Proof of Theorem~\ref{thm:fourier_slice_theorem}]
    It holds that
    \begin{align}
            &\mathcal{R}[f](\va,b)\\
            &=P^{-\frac{D}{2}}\sum_{\vx}f(\vx)r((\va^\top \vx- b)\bmod P)\\
            &=P^{-\frac{D}{2}}\sum_{\vx}f(\vx)\overline{r((\va^\top \vx- b)\bmod
            P)}\\
            &=P^{-\frac{D}{2}}\sum_{\vx}f(\vx)\,\overline{P^{-\frac{1}{2}}\sum_{v}\mathcal{F}_1[r](v)\mathrm{e}^{\frac{2\pi\mathrm{i}v(\va^\top
            \vx- b)}{P}}}\\
            &=P^{-\frac{1}{2}}\sum_{v\in\mathbb{Z}_P}\big(P^{-\frac{D}{2}}\sum_{\vx}f(\vx)\mathrm{e}^{\frac{-2\pi\mathrm{i}v\va^\top \vx}{P}}\big)\,\overline{\mathcal{F}_1[r](v)}\mathrm{e}^{\frac{2\pi\mathrm{i}v b}{P}}\\
            &=P^{-\frac{1}{2}}\sum_{v\in\mathbb{Z}_P}\mathcal{F}_D[f](v \va\bmod P)\;\overline{\mathcal{F}_1[r](v)}\;\mathrm{e}^{\frac{2\pi\mathrm{i}v b}{P}}.
  \end{align}
  Therefore, the discrete Fourier transform for $b$ leads to
  \begin{equation}
      \mathcal{F}_1[\mathcal{R}[f](\va,\cdot)]=\mathcal{F}_D[f](v \va\bmod P)\;\overline{\mathcal{F}_1[r](v)},
  \end{equation}
  which yields the conclusion.
\end{proof}

\section{\label{sec:B}Proof on Exact Representation of Function by Discretized Neural Network}

We provide a proof of Theorem~\ref{thm:ridgelet} in Sec.~\ref{sec:exact_reconstruction}.

\begin{proof}[Proof of Theorem~\ref{thm:ridgelet}]
In this proof, we write $w=\mathcal{R}[f]$ for simplicity of notation.
    By~\eqref{eq:S}, we have
    \begin{align}
        &\mathcal{S}[w](\vx)=P^{-\frac{D}{2}}\sum_{\va}[w(\va,\cdot)\ast
        g(\cdot)](\va^\top \vx)\\
        \label{eq:2}
        &=P^{-\frac{D}{2}}\sum_{\va}\sum_{v}\mathcal{F}_1[w(\va,\cdot)](v)\;\mathcal{F}_1[g](v)\;\mathrm{e}^{\frac{2\pi\mathrm{i}v\va^\top \vx}{P}},
    \end{align}
    where we use
    \begin{equation}
        f\ast g(y)\coloneqq\sum_{b}f(b)g(y-b)=\sum_{v}\mathcal{F}_1[f](v)\mathcal{F}_1[g](v)\mathrm{e}^{\frac{2\pi\mathrm{i}vy}{P}}.
    \end{equation}
    Then, applying Theorem~\ref{thm:fourier_slice_theorem} to $w=\mathcal{R}[f]$, we have
    \begin{align}
        \label{eq:3}
        &\text{\eqref{eq:2}}=P^{-\frac{D}{2}}\sum_{\va}\sum_{v}\mathcal{F}_D[f](v \va\bmod P)\;\overline{\mathcal{F}_1[r](v)}\;\mathcal{F}_1[g](v)\;\mathrm{e}^{\frac{2\pi\mathrm{i}v\va^\top \vx}{G}}.
    \end{align}
    To evaluate the right-hand side, we use the fact that $P$ is a prime. In this case, for any $v\neq 0$, the set $\sZ_P=\{0,1,\ldots,P-1\}$ is identical to $\{0,v,\ldots,(P-1)v\}$.
    Thus, recalling $\mathcal{F}_1[g](0)=\sum_b g(b)=0$ as in~\eqref{eq:g_not_zero} and letting $\va^\prime\in\sZ_P^D$ denote the unique vector satisfying $\va^\prime=v \va\bmod P\in\sZ_P^D$, we obtain
    \begin{align}
    &\text{\eqref{eq:3}}=P^{\nicefrac{-D}{2}}\sum_{v\in\sZ_P}\sum_{\va^\prime\in\sZ_P^D}\mathcal{F}_D[f](\va^\prime)\;\overline{\mathcal{F}_1[r](v)}\;\mathcal{F}_1[g](v)\;\mathrm{e}^{\frac{2\pi\mathrm{i}\va^{\prime\top}
    \vx}{P}}\\
    &=\left(\sum_v\mathcal{F}_1[g](v)\overline{\mathcal{F}_1[r](v)}\right)\times\left(P^{\nicefrac{-D}{2}}\sum_{\va^\prime}\mathcal{F}_D[f](\va^\prime)\mathrm{e}^{\frac{2\pi\mathrm{i}\va^{\prime\top}\vx}{P}}\right)\\
    &=C_{g,r}f(\vx).
    \end{align}
    Therefore, it holds that
    \begin{equation}
        f(\vx)=\frac{1}{C_{g,r}}\mathcal{S}[w](\vx),
    \end{equation}
    which yields the conclusion.
\end{proof}

\section{\label{sec:C}Proof on Runtime of Quantum Ridgelet Transform}

In this section, we summarize basic notions of quantum computation; for more detial, see the textbooks and the lecture notes by, e.g.,~\citet{N4,arXiv:1907.09415}.
Then, we provide a proof of Theorem~\ref{thm:runtime_qrt} in Sec.~\ref{sec:quantum_ridgelet_transform}.

Analogously to a bit $\{0,1\}$ in classical computation,
the unit of quantum computation is a quantum bit (qubit), mathematically represented by $\sC^2$, i.e., a $2$-dimensional complex Hilbert space.
A fixed orthonormal basis of a qubit $\sC^2$ is denoted by $\left\{\Ket{0}\coloneqq\left(\begin{smallmatrix}1\\0\end{smallmatrix}\right),\Ket{1}\coloneqq\left(\begin{smallmatrix}0\\1\end{smallmatrix}\right)\right\}$.
Similar to a bit taking a state $b\in\{0,1\}$, a qubit takes a quantum state $\Ket{\psi}=\alpha_0\Ket{0}+\alpha_1\Ket{1}=\left(\begin{smallmatrix}\alpha_0\\\alpha_1\end{smallmatrix}\right)\in\mathbb{C}^2$.
While a register of $m$ bits takes values in ${\left\{0,1\right\}}^m$, a quantum register of $m$ qubits is represented by the tensor-product space ${\left(\sC^2\right)}^{\otimes m}\cong\mathbb{C}^{2^m}$, i.e., a $2^m$-dimensional Hilbert space.
We may use $=$ rather than $\cong$ to represent isomorphism for brevity.
We let $\mathcal{H}$ denote a finite-dimensional Hilbert space representing a quantum register; that is, an $m$-qubit register is $\mathcal{H}=\mathbb{C}^{2^m}$.
A fixed orthonormal basis $\{\Ket{x}: x\in\{0,\ldots,2^m-1\}\}$ labeled by $m$-bit strings, or the corresponding integers, is called the \textit{standard basis} of $\mathcal{H}$.
A state of $\mathcal{H}$ can be denoted by $\Ket{\psi} = \sum_{x=0}^{2^m-1}\alpha_x\Ket{x}\in\mathcal{H}$.
Note that any quantum state $\Ket{\psi}$ requires an $L^2$ normalization condition $\left\|\Ket{\psi}\right\|_2=1$, and for any $\theta\in\mathbb{R}$, $\Ket{\psi}$ is identified with $\mathrm{e}^{\mathrm{i}\theta}\Ket{\psi}$.

In the bra-ket notation,
the conjugate transpose of the column vector $\Ket{\psi}$ is a row vector denoted by $\Bra{\psi}$, where $\Bra{\psi}$ and $\Ket{\psi}$ may be called a bra and a ket, respectively. The inner product of $\Ket{\psi}$ and $\Ket{\phi}$ is denoted by $\Braket{\psi|\phi}$, while their outer product $\Ket{\psi}\Bra{\phi}$ is a matrix.
The conjugate transpose of a matrix $\mA$ is denoted by $\mA^\dag$, and the transpose of $\mA$ with respect to the standard basis is denoted by $\mA^\top$.

A measurement of a quantum state $\Ket{\psi}$ is a sampling process that returns a randomly chosen bit string from the quantum state.
An $m$-qubit state $\Ket{\psi}=\sum_{x=0}^{2^m-1}\alpha_x\Ket{x}$ is said to be in a superposition of the basis states $\Ket{x}$s.
A measurement of $\Ket{\psi}$ in the standard basis $\{\Ket{x}\}$ provides a random $m$-bit integer $x\in\{0,\ldots,2^m-1\}$ as outcome, with probability $p(x)={|\alpha_x|}^2$.
After the measurement, the state changes from $\Ket{\psi}$ to $\Ket{x}$ corresponding to the obtained outcome $x$, and loses the randomness in $\Ket{\psi}$; that is, to iterate the same sampling as this measurement, we need to prepare $\Ket{\psi}$ repeatedly for each iteration.
For two registers $\mathcal{H}^A\otimes\mathcal{H}^B$ and their state $\Ket{\phi}^{AB}=\sum_{x,x}\alpha_{x,x^\prime}\Ket{x}^A\otimes\Ket{x^\prime}^B\in\mathcal{H}^A\otimes\mathcal{H}^B$, a measurement of the register $\mathcal{H}^B$ for $\Ket{\phi}^{AB}$ in the standard basis $\{\Ket{x^\prime}^B\}$ of $\mathcal{H}^B$ yields an outcome $x^\prime$ with probability $p(x^\prime)=\sum_{x}p(x,x^\prime)$, where $p(x,x^\prime)={|\alpha_{x,x^\prime}|}^2$.
The superscripts of a state or an operator represent which register the state or the operator belongs to, while we may omit the superscripts if it is clear from the context.

A quantum algorithm starts by initializing $m$ qubits in a fixed state $\Ket{0}^{\otimes m}$, which we may write as $\Ket{0}$ if $m$ is clear from the context.
Then, we apply a $2^m$-dimensional unitary operator $\mU$ to $\Ket{0}^{\otimes m}$, to prepare a state $\mU\Ket{0}^{\otimes m}$.
Finally, a measurement of $\mU\Ket{0}^{\otimes m}$ is performed to sample an $m$-bit integer from a probability distribution given by $\mU\Ket{0}^{\otimes m}$.
Analogously to classical logic-gate circuits, $\mU$ is represented by a quantum circuit composed of sequential applications of unitaries acting at most two qubits at a time.
Each of these unitaries is called an elementary quantum gate.
The runtime of a quantum algorithm represented by a quantum circuit is determined by the number of applications of elementary quantum gates in the circuit.

Using these notions, the proof of Theorem~\ref{thm:runtime_qrt} in Sec.~\ref{sec:quantum_ridgelet_transform} is shown as follows.

\begin{proof}[Proof of Theorem~\ref{thm:runtime_qrt}]
    The runtime of Algorithm~\ref{alg:qrt} is domianated by Step~2 as shown in the following.

    Step~1 is performed within runtime $O(\polylog(P))$ due to our assumption that $\ket{r}=\sum_{b}r(b)\ket{b}$ can be prepared in time $O(\polylog(P))$.

    Step~2 is dominated by the runtime of $\mF_P^{\otimes D}$, which is $O(D\times\polylog(P))$ since the runtime of $\mF_P$ is $O(\polylog(P))$ as shown in~\eqref{eq:QFT}. The inverse $\mF_P^\dag$ has the same runtime as $\mF_P$ since $\mF_P^\dag$ is implemented by applying the inverse of each gate in the quantum circuit for $\mF_P$ in the reverse order.
    A techinical remark is that, for simplicity of presentation, we write our statement and the proof based on the exact implementation of $\mF_P$ by~\citet{doi:10.1142/S0219749904000109} to avoid writing the polylogarithmic error factors that may arise in approximate implementations of $\mF_P$ such as those by~\citet{https://doi.org/10.48550/arxiv.quant-ph/9511026,892139}.
    Even if one uses these approximate implementations of $\mF_P$, our theorem follows from the same argument with the polylogarithmic error factors multiplied.
    We also note that some of the other implementations of QFT such as those by~\citet{https://doi.org/10.48550/arxiv.quant-ph/0201067,892140} are targeted at $P=2^n$ for $n=1,2,\ldots$, and our algorithm does not use these implementations since $P$ is a prime number in our setting.

    Step~3 is performed within runtime $O(\polylog(P))$ by arithmetics on $O(\log(P))$ qubits. This is implemented by writing the classical computation of the arithmetic in a reversible way as a classical circuit and replacing each Toffoli gate in the classical circuit with the quantum Toffoli gate to obtain the quantum circuit for the arithmetics part.

    Step~4 has the runtime of $O(\polylog(P))$ for $\mF_P^\dag$ as discussed in Step~2 as well.

    Consequently, the overall runtime is dominated by that of Step~2, i.e., $O(D\times\polylog(P))$.
\end{proof}

\section{\label{sec:D}Proof on Runtime of Quantum Algorithm for Sampling from Optimized Probability Distribution}

In this section, we clarify the explicit implementation and the runtime of the input model for our algorithm of Theorem~\ref{thm:quantum_sampling} in Sec.~\ref{sec:winning}.
Then, We provide the proof of Theorem~\ref{thm:quantum_sampling}.
Our algorithm for Theorem~\ref{thm:quantum_sampling} is shown in Algorithm~\ref{alg:sampling}.
See also Appendix~\ref{sec:C} for the notations on quantum computation.

\begin{algorithm}[t]
  \caption{\label{alg:sampling}Quantum algorithm for sampling from optimized probability distribution for
winning ticket of neural networks.}
  \begin{algorithmic}[1]
      \REQUIRE{$\lambda,\Delta>0$, assumptions in Sec.~\ref{sec:setting}, the input model described in Sec.~\ref{sec:winning}.}
      \ENSURE{Parameters $(\va,b)$ sampled from the optimized probability distribution $p_{\lambda,\Delta}^\ast(\va,b)$ in~\eqref{eq:optimized_distribution}.}
      \STATE{Prepare a quantum state $\ket{\psi_\mathrm{in}}\propto\sum_{\vx}\hat{p}_\mathrm{data}(\vx)f(\vx)\ket{\vx}$.}
      \STATE{Apply QRT to obtain a state $\mR\ket{\psi_\mathrm{in}}$.}
      \STATE{Apply ${(\mR\hat{\mP}_\mathrm{data}\mR^\top+\lambda\mI)}^{-1}$ by quantum singular value transformation (QSVT) to obtain
      \begin{align}
          &\frac{1}{\sqrt{\gamma}}\sum_{\va,b}P^{-\frac{D}{2}}w_\lambda^\ast(\va,b)\ket{\va,b}\propto{(\mR\hat{\mP}_\mathrm{data}\mR^\top+\lambda\mI)}^{-1}\mR\ket{\psi_\mathrm{in}}.
      \end{align}}
      \STATE{Apply ${(\mW_{\lambda}+\frac{\Delta}{\gamma}\mI)}^{-\frac{1}{2}}$ by QSVT to obtain
      \begin{align}
          &\ket{p_{\lambda,\Delta}^\ast} \propto{\Big(\mW_{\lambda}+\frac{\Delta}{\gamma}\mI\Big)}^{-\frac{1}{2}}{(\mR\hat{\mP}_\mathrm{data}\mR^\top+\lambda\mI)}^{-1}\mR\ket{\psi_\mathrm{in}}.
      \end{align}
      }
  \STATE{Perform a measurement in the standard basis $\{\ket{\va,b}\}$ to sample $(\va,b)$ as the outcome according to the probability distribution $p_{\lambda,\Delta}^\ast(\va,b)$.}
    \STATE{\textbf{Return } $(\va,b)$.}
  \end{algorithmic}
\end{algorithm}

As an input model,
Algorithm~\ref{alg:sampling} uses preparation of quantum states
\begin{align}
    \ket{\hat{p}_\mathrm{data}}&\coloneqq\sum_{\vx}\sqrt{\hat{p}_\mathrm{data}(\vx)}\ket{\vx},\\
    \ket{\psi_\mathrm{in}}&\coloneqq\frac{\sum_{\vx}\hat{p}_\mathrm{data}(\vx)f(\vx)\ket{\vx}}{\sqrt{\sum_{\vx}{|\hat{p}_\mathrm{data}(\vx)f(\vx)|}^2}},
\end{align}
where $\hat{p}_\mathrm{data}$ is the empirical distribution of $\vx_1,\ldots,\vx_M$  for the $M$ input-output pairs of examples $(\vx_1,f(\vx_1)),\ldots,(\vx_M,f(\vx_M))\in\sZ_P^D\times\sR$.
We will explain the implementation of the input model by quantum circuit, so as to show that the runtime of the input model in terms of the circuit depth can be bounded by $O(D\polylog(M))$ including the runtime of quantum random access memory (QRAM)~\citep{PhysRevA.78.052310,PhysRevLett.100.160501} used for the implementation.
In summary, upon collecting the $M$ examples, we will construct an $O(M)$-size sparse data structure shown by~\citet{kerenidis_et_al:LIPIcs:2017:8154}, and the state preparation can be efficiently conducted with the algorithm by~\citet{https://doi.org/10.48550/arxiv.quant-ph/0208112} using this data structure combined with QRAM\@; in addition, it is known that QRAM used for this input model is implementable explicitly as a parallelized quantum circuit of a $O(\polylog(M))$ depth using at most $O(M)$ qubits as shown, e.g., by~\citet{8962352,PRXQuantum.2.020311}.
In the following, we explain the detail of these facts to avoid any potential confusion about feasibly of the input model for Algorithm~\ref{alg:sampling} in our setting.

We explain how to construct the sparse data structure shown by~\citet{kerenidis_et_al:LIPIcs:2017:8154} in our setting.
In the following, we describe the data structure for $\ket{\hat{p}_\mathrm{data}}$ in detail, and then clarify the difference between $\ket{\hat{p}_\mathrm{data}}$ and $\ket{\psi_\mathrm{in}}$.
For the preparation of $\ket{\hat{p}_\mathrm{data}}$, along with collecting $(\vx_m,f(\vx_m))$ one by one for each $m\in\{1,\ldots,M\}$, we are to perform a preprocessing to count the number of examples and store the empirical distribution in a sparse data structure proposed by~\citet{kerenidis_et_al:LIPIcs:2017:8154}.
To describe this sparse data structure, we first explain the underlying dense binary tree (which we introduce for explanation but never store in the memory), and then clarify a sparse version of the binary tree to be stored in the memory.
Each leaf of the underlying dense binary tree represents a set $\{\vx\}$ for each $\vx\in\sZ_P^D$; i.e., the underlying dense binary tree has $O(P^D)$ leaves.
Each parent in this dense binary tree represents the sum of the two sets represented by its two children; thus, the root of the tree represents $\sZ_P^D$.
With this definition, each node in the dense binary tree aims at storing the number of examples in the set represented by the node; e.g., a leaf representing $\{\vx\}$ stores the cardinality of $\{m\in\{1,\ldots,M\}:\vx_m=\vx\}$.
As for the sparse version of this binary tree,
each leaf for $\{\vx\}$ counts and stores the number of examples satisfying $\vx_m=\vx$ in the same way, but the sparse data structure does not store leaves with zero example.
Each parent in the sparse data structure stores the sum of the counts for its children in the tree in the same way, but the sparse data structure does not store branches with zero example.
As a whole, the root should store the number of all the examples, i.e., $M$.
In this sparse data structure, the number of nodes at each depth of the tree is at most $M$, and the height of the tree is $O(D\log(P))$.
To construct this sparse data structure for $\ket{\hat{p}_\mathrm{data}}$, we initialize the counts in all the nodes as $0$, and for each $\vx_m$ of the $M$ examples, we increment the count in the leaf for $\{\vx_m\}$ and those in the corresponding ascendants of this leaf, where the increment for each example can be performed within poly-logarithmic time $O(\polylog(M))$ as shown by~\citet{kerenidis_et_al:LIPIcs:2017:8154}.
Therefore, the runtime of this preprocessing for all $M$ examples is $\widetilde{O}(M)$, where $\widetilde{O}$ ignores the poly-logarithmic factors;
that is, this runtime has the same scaling as just collecting the $M$ examples up to the poly-logarithmic factors.
The sparse data structure for $\ket{\psi_\mathrm{in}}$ is based on the same underlying dense binary tree, but a leaf for each $\{\vx\}$ stores $\hat{p}_\mathrm{data}(\vx)f(\vx)$ instead of $\hat{p}_\mathrm{data}(\vx)$, and each parent store the sum of those at its children in the same way; to construct this data structure, instead of incrementing the counts stored in the nodes by $+1$ for each $\vx_m$, we add $f(\vx_m)$ to the number stored in a leaf $\{\vx_m\}$ and update the numbers stored in the ascendants of this leaf correspondingly.

We use this sparse data structure with QRAM to prepare the states $\ket{\hat{p}_\mathrm{data}}$ and $\ket{\psi_\mathrm{in}}$.
In particular, the preparation of these states is achieved by a parallelized quantum circuit of depth $O(D\log(P))$ to run a quantum algorithm of~\citet{https://doi.org/10.48550/arxiv.quant-ph/0208112}, where each $O(1)$-depth part of the circuit processes each depth of the $O(D\log(P))$-height tree, and the QRAM is queried once for each of these parts, in total $O(D\log(P))$ times~\citep{kerenidis_et_al:LIPIcs:2017:8154}.
Each of the queries to the QRAM is implementable within runtime $O(\polylog(M))$.
In particular, the QRAM is an architecture for using classical data in a quantum algorithm without destroying superposition, defined as (1) in the work by~\citet{PhysRevLett.100.160501}.
In our setting, the sparse data structure has at most $M$ nodes as leaves, and the number of nodes at each depth of the tree has at most $1,2,4,8,\ldots,M$ nodes, respectively.
Each query to the QRAM performs a quantum circuit depending on one of the collections of these $1,2,4,8,\ldots,M$ nodes at each depth of the tree.
The number of nodes to be used in each query is smaller than $M$.
Then, the QRAM for these $O(M)$ nodes is implementable by a $O(\polylog(M))$-depth parallelized quantum circuit on $O(M)$ qubits per query, e.g., by a circuit given in Fig.~10 of the work by~\citet{PRXQuantum.2.020311}, which is queried for each depth of the tree for our sparse data structure.
Importantly, the QRAM never measures and reads out classical bit values stored in the $O(M)$ nodes at each depth, but just performs an $O(\polylog(M))$-depth sequence of unitary gates in parallel to maintain quantum superposition.
As a whole, given the data structure and QRAM, the overall runtime per preparation of $\ket{\hat{p}_\mathrm{data}}$ and $\ket{\psi_\mathrm{in}}$ is bounded by
\begin{equation}
\label{eq:runtime_preparation}
  O(D\log(P)\times\polylog(M))=\widetilde{O}(D).
\end{equation}
To summarize,  this runtime is achievable because  of the following facts;
\begin{itemize}
    \item the $M$ examples are stored in the sparse data structure representing the binary tree of height $O(D\log(P))$;
    \item each depth of the $O(D\log(P))$-height tree is processed by a constant-depth quantum circuit with one query to the QRAM\@;
    \item  a single query to the QRAM for processing the $O(M)$ nodes at each depth of the tree is implemented by the $O(\polylog(M))$-depth quantum circuit.
\end{itemize}

With this input model, we provide the proof of Theorem~\ref{thm:quantum_sampling} on the runtime of Algorithm~\ref{alg:sampling}.

\begin{proof}
    In Algorithm~\ref{alg:sampling}, by steps~1,~2, and~3, we prepare a quantum state proportional to
    \begin{equation}
        \sum_{\va,b}P^{-\frac{D}{2}}w_\lambda^\ast(\va,b)\ket{\va,b}={(\mR\hat{\mP}_\mathrm{data}\mR^\top+\lambda\mI)}^{-1}\mR\sum_{\vx}\hat{p}_\mathrm{data}(\vx)f(\vx)\ket{\vx},
    \end{equation}
    which is the analytical formula for the solution of ridge regression in~\eqref{eq:minimization}.
    Then, by step~4 to apply ${(\mW_{\lambda}+\frac{\Delta}{\gamma}\mI)}^{-\frac{1}{2}}$ to this state, we obtain a quantum state
    \begin{equation}
        \ket{p_{\lambda,\Delta}^\ast}=\frac{\sum_{\va,b}P^{-\frac{D}{2}}w_\lambda^\ast(\va,b)\ket{\va,b}}{\sqrt{\sum_{\va,b}|P^{-\frac{D}{2}}w_\lambda^\ast(\va,b)|^2+\Delta}}.
    \end{equation}
    The runtime of Algorithm~\ref{alg:sampling} is dominated by step~4 requiring
    \begin{equation}
        \widetilde{O}\left(\frac{D}{\lambda}\frac{1}{\nicefrac{\Delta}{\gamma}}\right),
    \end{equation}
    as we will show step by step in the following.

    Step~1 for preparing the state $\ket{\psi_\mathrm{in}}$ is performed within runtime
    \begin{equation}
        \widetilde{O}(D),
    \end{equation}
    due to~\eqref{eq:runtime_preparation}.

    Step~2 for performing QRT is performed by Algorithm~\ref{alg:qrt} within runtime shown in Theorem~\ref{thm:runtime_qrt}, i.e.,
    \begin{equation}
        \widetilde{O}(D).
    \end{equation}

    Step~3 for applying ${(\mR\hat{\mP}_\mathrm{data}\mR^\top+\lambda\mI)}^{-1}$ is performed within runtime $\widetilde{O}(\nicefrac{D}{\lambda})$, as shown in the following.
    To explain Step~3, we follow a unified framework for describing a general class of quantum algorithms based on quantum singular value transform (QSVT) developed by~\citet{10.1145/3313276.3316366}; in particular, we construct a block encoding of $\mR\hat{\mP}_\mathrm{data}\mR^\top+\lambda\mI$ and implement ${(\mR\hat{\mP}_\mathrm{data}\mR^\top+\lambda\mI)}^{-1}$ by applying QSVT to this block encoding.
    Our construction of the block encoding of $\mR\hat{\mP}_\mathrm{data}\mR^\top+\lambda\mI$ is based on linear combination of block-encoded matrices $\mR\hat{\mP}_\mathrm{data}\mR^\top$ and $\mI$~\citep{10.1145/3313276.3316366}.
    The block encoding of $\mI$ is a trivial unitary operator $\mI$, and thus we here show the block encoding of $\mR\hat{\mP}_\mathrm{data}\mR^\top$.
    For $\mR\hat{\mP}_\mathrm{data}\mR^\top$, we use a block encoding of a density operator $\rho=\mR\hat{\mP}_\mathrm{data}\mR^\top$; in particular, if we have a quantum circuit $U$ for preparing a purification of $\rho$ from $\ket{0}$, then we can construct a block encoding of $\rho$ using $U$ and $U^\dag$ a constant number of times~\citep{10.1145/3313276.3316366}.
    We here prepare the purification of $\rho$ as follows.
    First, we prepare
    \begin{equation}
    \label{eq:state_qrt_data}
        \sum_{\vx}\sqrt{\hat{p}_\mathrm{data}(\vx)}\ket{\vx}
    \end{equation}
    within runtime
    \begin{equation}
        \widetilde{O}(D),
    \end{equation}
    due to~\eqref{eq:runtime_preparation}.
    Then, we add the same number of auxiliary qubits initialized in $\ket{0}$ as those for the state~\eqref{eq:state_qrt_data}, and apply \textsc{CNOT} gates on the auxiliary qubits controlled by this state to obtain
    \begin{equation}
    \label{eq:55}
        \sum_{\vx}\sqrt{\hat{p}_\mathrm{data}(\vx)}\ket{\vx}\otimes\ket{\vx},
    \end{equation}
    which requires runtime
    \begin{equation}
        \widetilde{O}(D)
    \end{equation}
    since the number of qubits is $\widetilde{O}(D)$.
    Then, we apply QRT to the first register to obtain
    \begin{equation}
        \Big(\mR\sum_{\vx}\sqrt{\hat{p}_\mathrm{data}(\vx)}\ket{\vx}\Big)\otimes\ket{\vx}
    \end{equation}
    within runtime shown in Theorem~\ref{thm:runtime_qrt}, i.e.,
    \begin{equation}
        \widetilde{O}(D).
    \end{equation}
    By tracing out the auxiliary qubits, the reduced state of the obtained state is $\rho=\mR\hat{\mP}_\mathrm{data}\mR^\top$, and the runtime of this block encoding is $\widetilde{O}(D)$.
    As a result, using the block encodings of $\mR\hat{\mP}_\mathrm{data}\mR^\top$ and $\mI$ a constant number of times, the procedure for linear combination of the block encoded matrices $\mR\hat{\mP}_\mathrm{data}\mR^\top$ and $\mI$ yields the block encoding of $\mR\hat{\mP}_\mathrm{data}\mR^\top+\lambda\mI$, which has the runtime
    \begin{equation}
        \label{eq:runtime_block_encoding_3}
        \widetilde{O}(D)
    \end{equation}
    dominated by that of $\mR\hat{\mP}_\mathrm{data}\mR^\top$.

    To apply ${(\mR\hat{\mP}_\mathrm{data}\mR^\top+\lambda\mI)}^{-1}$ in Step~3 to the state prepared by Step~2,
    we use QSVT of the block encoding of $\mR\hat{\mP}_\mathrm{data}\mR^\top+\lambda\mI$.
    As shown by~\citet{10.1145/3313276.3316366}, given any state $\ket{\psi}$ and a circuit $\mU$ for a block encoding of $\mA$, we can prepare a state proportional to $\mA^{-1}\ket{\psi}$ by querying $\mU$ and $\mU^\dag$ in total $\widetilde{O}(\kappa)$ times using the technique of variable-time amplitude amplification~\cite{ambainis:LIPIcs:2012:3426,doi:10.1137/16M1087072,chakraborty_et_al:LIPIcs:2019:10609,10.1145/3313276.3316366}, where $\kappa$ is the condition number of $\mA$, and the runtime includes that for amplitude amplification.
    In our case, the condition number of $\mA=\mR\hat{\mP}_\mathrm{data}\mR^\top+\lambda\mI$ is
    \begin{equation}
        \kappa\leqq\frac{1+\lambda}{\lambda}=O\Big(\frac{1}{\lambda}\Big), 
    \end{equation}
    since the largest eigenvalue $\|\mA\|_\infty$ of this $\mA$ is upper bounded by
    \begin{equation}
        \|\mA\|_\infty\leqq\|\mR\|_\infty\|\hat{\mP}_\mathrm{data}\|_\infty\|\mR^\top\|_\infty+\|\lambda\mI\|_\infty\leqq 1+\lambda,
    \end{equation}
    and the smallest eigenvalue is lower bounded by $\lambda$ due to the term $\lambda\mI$.
    The runtime of each circuit $\mU$ for this block encoding is $\widetilde{O}(D)$ as shown in~\eqref{eq:runtime_block_encoding_3}.
    As a whole, the runtime by the end of Step~3 is dominated by
    \begin{equation}
    \label{eq:runtime_step3}
        \widetilde{O}(\kappa)\times\widetilde{O}(D)=\widetilde{O}\Big(\frac{D}{\lambda}\Big).
    \end{equation}

    Step~4 for applying ${(\mW_{\lambda}+\frac{\Delta}{\gamma}\mI)}^{-\frac{1}{2}}$ is performed within runtime $\widetilde{O}\Big(\frac{D}{\lambda}\times \frac{1}{\nicefrac{\Delta}{\gamma}}\Big)$, as shown in the following.
    As in the above explanation, using the framework of QSVT, we here construct a block encoding of $\mW_{\lambda}+\frac{\Delta}{\gamma}\mI$ and implement ${\Big(\mW_{\lambda}+\frac{\Delta}{\gamma}\mI\Big)}^{-\frac{1}{2}}$ by applying QSVT to this block encoding.
    Our construction of the block encoding of $\mW_{\lambda}+\frac{\Delta}{\gamma}\mI$ is based on linear combination of block-encoded matrices $\mW_{\lambda}$ and $\mI$.
    The block encoding of $\mI$ is a trivial unitary operator, and noticing that $\mW_{\lambda}$ is a density operator by definition (i.e.,  $\mW_{\lambda}\geqq 0$ and  $\Tr[\mW_{\lambda}]=1$), we show the block encoding of $\mW_{\lambda}$ using the block encoding of the density operator similar to the above.
    In particular, by the end of Step~3, we have constructed a circuit for preparing
    \begin{align}
        &\frac{1}{\sqrt{\gamma}}\sum_{\va,b}P^{-\frac{D}{2}}w^\ast(\va,b)\ket{\va,b}\propto{(\mR\hat{\mP}_\mathrm{data}\mR^\top+\lambda\mI)}^{-1}\mR\sum_{\vx}\sqrt{\hat{p}_\mathrm{data}(\vx)}\ket{\vx},
    \end{align}
    which runs within time $\widetilde{O}(\frac{D}{\lambda})$ due to~\eqref{eq:runtime_step3}.
    Then, in the same way as~\eqref{eq:55}, we add the same number of auxiliary qubits initialized in $\ket{0}$ as those for this state, and apply \textsc{CNOT} gates on the auxiliary qubits controlled by the above state to obtain
    \begin{equation}
        \frac{1}{\sqrt{\gamma}}\sum_{\va,b}P^{-\frac{D}{2}}w^\ast(\va,b)\ket{\va,b}\otimes\ket{\va,b},
    \end{equation}
    which requires $\widetilde{O}(D)$ runtime since the number of qubits is $\widetilde{O}(D)$.
    By tracing out the auxiliary qubits, the reduced state becomes
    \begin{equation}
        \mW_{\lambda}=\sum_{\va,b}\frac{|P^{-\frac{D}{2}}w^\ast(\va,b)|^2}{\gamma}\ket{\va,b}\bra{\va,b},
    \end{equation}
    and the runtime of this block encoding is $\widetilde{O}(\nicefrac{D}{\lambda})$ dominated by the state-preparation circuit by the end of Step~3\@.
    Then, using the block encodings of $\mW_{\lambda}$ and $\mI$ a constant number of times, the linear combination of the block encoded matrices $\mW_{\lambda}$ and $\mI$ yields the block encoding of $\mW_{\lambda}+\frac{\Delta}{\gamma}\mI$, which has the runtime
    \begin{equation}
        \label{eq:runtime_block_encoding_4}
        \widetilde{O}\Big(\frac{D}{\lambda}\Big).
    \end{equation}

    To apply ${(\mW_{\lambda}+\frac{\Delta}{\gamma})\mI}^{-\frac{1}{2}}$ in Step~4,
    we use QSVT of the block encoding of $\mW_{\lambda}+\frac{\Delta}{\gamma}\mI$.
    In the same way as applying $\mA^{-1}$ explained above, given any state $\ket{\psi}$ and a circuit $\mU$ for a block encoding of $\mA$, using the technique of variable-time amplitude amplification, we can prepare a state proportional to $\mA^{-\frac{1}{2}}\ket{\psi}$ by querying $\mU$ and $\mU^\dag$ in total $\widetilde{O}(\kappa)$ times~\citep{10.1145/3313276.3316366}.
    In the same way as the analysis of Step~3 with replacing $\lambda$ with $\frac{\Delta}{\gamma}$,
    the condition number of $\mW_{\lambda}+\frac{\Delta}{\gamma}\mI$ is $O(\frac{1}{\nicefrac{\Delta}{\gamma}})$.
    As a whole, the runtime by the end of Step~4 is dominated by
    \begin{equation}
    \label{eq:runtime_step4}
        \widetilde{O}\Big(\frac{1}{\nicefrac{\Delta}{\gamma}}\Big)\times\widetilde{O}\Big(\frac{D}{\lambda}\Big).
    \end{equation}

    Consequently, the overall runtime of Algorithm~\ref{alg:sampling} is bounded by~\eqref{eq:runtime_step4} in Step~4, i.e.,
    \begin{equation}
        \widetilde{O}\left(\frac{D}{\lambda}\frac{1}{\nicefrac{\Delta}{\gamma}}\right)=\widetilde{O}\left(\frac{D}{\lambda\Delta}\times\gamma\right),
    \end{equation}
    which yields the conclusion.
\end{proof}

\section{\label{sec:E}Proof on Bounds for Finding Winning Ticket of Neural Networks}

\begin{algorithm}[t]
  \caption{\label{alg:winning}Quantum algorithm for finding a winning ticket for the original neural network $\mathcal{S}[w_\lambda^\ast]$.}
  \begin{algorithmic}[1]
      \REQUIRE{$\epsilon,\delta,\lambda>0$, assumptions in Sec.~\ref{sec:setting}, $M$ examples stored in data structure in Sec~\ref{sec:winning}, $\Delta$ and $N$ bounded by Theorem~\ref{thm:winning}.}
      \ENSURE{A subnetwork $\hat{f}$ of $\mathcal{S}[w_\lambda^\ast]$ achieving~\eqref{eq:S_hat_f_error} and characterized as~\eqref{eq:sparse_subnetwork} by the set of parameters $\hat{\sW}\subset\sZ_P^D\times\sZ_P$ and the weights $\boldsymbol{\hat{w}}^\ast$ for the nodes in the hidden layer.}
      \STATE{Initialize $\hat{\sW}=\emptyset$.}
      \FOR{$N$ times}
      \STATE{Sample $(\va,b)\in\sZ_P^D\times\sZ_P$ from $p_{\lambda,\Delta}^\ast(\va,b)$ in~\eqref{eq:optimized_distribution} by the quantum algorithm in Theorem~\ref{thm:quantum_sampling}.}
      \STATE{Add the sampled parameter $(\va,b)$ to $\hat{\sW}$}.
      \ENDFOR%
      \STATE{Train the weight $w(\va,b)$ of the sampled subnetwork $\sum_{(\va,b)\in\hat{\sW}}w(\va,b)g((\va^\top\vx-b)\bmod P)$ by convex optimization using the given $M$ examples, to obtain $\hat{\vw}^\ast$ as a solution of minimizing $\tilde{J}(w)$ in~\eqref{eq:minimization}.}
    \COMMENT{This convex optimization can be performed by classical algorithms such as stochastic gradient descent (SGD)~\citep{pmlr-v99-harvey19a}.}
    \STATE{\textbf{Return } $\hat{\sW}$ and $\hat{\vw}^\ast$.}
  \end{algorithmic}
\end{algorithm}

We provide the proof of Theorem~\ref{thm:winning} in Sec.~\ref{sec:finding}.
Our algorithm for Theorem~\ref{thm:winning} is shown in Algorithm~\ref{alg:winning}.

\begin{proof}[Proof of Theorem~\ref{thm:winning}]
With writing
\begin{equation}
\label{eq:N_Delta}
    N_\epsilon\coloneqq{\Big(\frac{\alpha}{\beta\sqrt{\epsilon}}\Big)}^{\frac{1}{\beta}},
\end{equation}
we set $\Delta$ and $N$ as
\begin{align}
    \label{eq:delta_epsilon}
    \Delta&={(\alpha {(N_\epsilon+1)}^{-(1+\beta)})}^2=\Omega(\epsilon^{1+\frac{1}{\beta}}),\\
\label{eq:N_optimal}
    N&=\left\lceil2\left(\lceil N_\epsilon\rceil+\frac{1}{1+2\beta}\frac{{(N_\epsilon+1)}^{2+2\beta}}{N_\epsilon^{1+2\beta}}\right)\ln\Big(\frac{\lceil N_\epsilon\rceil}{\delta}\Big)\right\rceil=O\Big(N_\epsilon\log\Big(\frac{N_\epsilon}{\delta}\Big)\Big)=O\left(\frac{1}{\epsilon^{\frac{1}{2\beta}}}\log\Big(\frac{1}{\epsilon\delta}\Big)\right).
\end{align}
Note that we can set $\Delta$ and $N$ flexibly up to changing $\alpha$ and $\beta$ in the constant factors of~\eqref{eq:delta_epsilon} and~\eqref{eq:N_optimal}, as long as the function $f$ to be learned remains in the set of $(\alpha,\beta)$-class functions.
The parameter $N_\epsilon$ is chosen in such a way that we have
\begin{equation}
    \sum_{j=\lceil N_\epsilon\rceil+1}^{\infty}\alpha j^{-(1+\beta)}
    \leqq\int_{N_\epsilon}^{\infty}\alpha x^{-(1+\beta)}dx=\frac{\alpha}{\beta N_\epsilon^\beta}=\sqrt{\epsilon}.
\end{equation}
For the analysis, let $\sW_{\lambda,\Delta}$ denote a set of parameters $(\va,b)$ for high-weight nodes in the hidden layer of $\mathcal{S}[w_\lambda^\ast]$, given by
\begin{align}
\label{eq:winning}
    &\sW_{\lambda,\Delta}\coloneqq\{(\va_j,b_j)\in \sZ_P^D\times\sZ_P:|P^{-\frac{D}{2}}w_\lambda^\ast(\va_j,b_j)|^2\geqq\Delta, j\in\{1,\ldots,P^{D+1}\} \}.
\end{align}
Then, due to the assumption on the $(\alpha,\beta)$-class functions
\begin{equation}
    |P^{-\frac{D}{2}}w_\lambda^\ast(\va_j,b_j)|\leqq\alpha j^{-(1+\beta)},
\end{equation}
we have by construction
\begin{equation}
    \label{eq:W_bound}
    |\sW_{\lambda,\Delta}|\leqq \lceil N_\epsilon\rceil.
\end{equation}

For any $(\va,b)\in\sW_{\lambda,\Delta}$, we have
\begin{align}
    &\Pr\{(\va,b)\not\in\hat{\sW}\}\\
    &={(1-p_{\lambda,\Delta}^\ast(\va,b))}^N\\
    &={\Big(1-\frac{1}{Z}\frac{|P^{-\frac{D}{2}}w_{\lambda}^\ast(\va,b)|^2}{|P^{-\frac{D}{2}}w_{\lambda}^\ast(\va,b)|^2+\Delta}\Big)}^N\\
    &\leqq{\Big(1-\frac{1}{|\sW_{\lambda,\Delta}|+\frac{1}{1+2\beta}\frac{{(N_\epsilon+1)}^{2+2\beta}}{N_\epsilon^{1+2\beta}}}\frac{\Delta}{\Delta+\Delta}\Big)}^N\\
    &\leqq\exp\left(-\frac{N}{2\left(|\sW_{\lambda,\Delta}|+\frac{1}{1+2\beta}\frac{{(N_\epsilon+1)}^{2+2\beta}}{N_\epsilon^{1+2\beta}}\right)}\right),
\end{align}
where the first inequality follows from
\begin{align}
    Z&=\sum_{(\va,b)\in\sW_{\lambda,\Delta}}\frac{|P^{-\frac{D}{2}}w_\lambda^\ast(\va,b)|^2}{|P^{-\frac{D}{2}}w_\lambda^\ast(\va,b)|^2+\Delta}+ \sum_{(\va,b)\in\sZ_P^D\times\sZ_P\setminus\sW_{\lambda,\Delta}}\frac{|P^{-\frac{D}{2}}w_\lambda^\ast(\va,b)|^2}{|P^{-\frac{D}{2}}w_\lambda^\ast(\va,b)|^2+\Delta}\\
     &\leqq|\sW_{\lambda,\Delta}|+\frac{1}{\Delta}\sum_{j=\lceil N_\epsilon\rceil+1}^{\infty}{|\alpha j^{-(1+\beta)}|}^2\\
     &=|\sW_{\lambda,\Delta}|+\frac{1}{\Delta}\sum_{j=\lceil N_\epsilon\rceil+1}^{\infty}{\alpha^2 j^{-2(1+\beta)}}\\
     &\leqq|\sW_{\lambda,\Delta}|+\frac{1}{\Delta}\int_{N_\epsilon}^\infty\frac{\alpha^2}{x^{-2(1+\beta)}}dx\\
     &=|\sW_{\lambda,\Delta}|+\frac{1}{\Delta}\frac{\alpha^2}{(1+2\beta)N_\epsilon^{1+2\beta}}\\
     &=|\sW_{\lambda,\Delta}|+\frac{1}{1+2\beta}\frac{{(N_\epsilon+1)}^{2+2\beta}}{N_\epsilon^{1+2\beta}},
\end{align}
and the second inequality follows from $1-x\leqq\mathrm{e}^{-x}$ for $x\in\sR$.
Thus, due to the union bound,
it follows from~\eqref{eq:N_optimal} and~\eqref{eq:W_bound} that
\begin{align}
    &\Pr\{\sW_{\lambda,\Delta}\not\subseteqq\hat{\sW}\}\\
    &\leqq|\sW_{\lambda,\Delta}|\exp\left(-\frac{N}{2\left(|\sW_{\lambda,\Delta}|+\frac{1}{1+2\beta}\frac{{(N_\epsilon+1)}^{2+2\beta}}{N_\epsilon^{1+2\beta}}\right)}\right)\\
    &\leqq\delta\frac{|\sW_{\lambda,\Delta}|}{\lceil N_\epsilon\rceil}\exp\left(-\frac{2\left(\lceil N_\epsilon\rceil+\frac{1}{1+2\beta}\frac{{(N_\epsilon+1)}^{2+2\beta}}{N_\epsilon^{1+2\beta}}\right)}{2\left(|\sW_{\lambda,\Delta}|+\frac{1}{1+2\beta}\frac{{(N_\epsilon+1)}^{2+2\beta}}{N_\epsilon^{1+2\beta}}\right)}\right)\\
    &\leqq\delta.
\end{align}

Therefore, with the choice of $N$ in~\eqref{eq:N_optimal}, we have $\sW_{\lambda,\Delta}\subseteqq\hat{\sW}$ with high probability greater than $1-\delta$.
Then for any $\hat{\sW}$ satisfying $\sW_{\lambda,\Delta}\subseteqq\hat{\sW}$, 
the subnetwork with nodes in the hidden layer parameterized by $(\va,b)\in\hat{\sW}$ approximates the original network $\mathcal{S}[w_\lambda^\ast]$ as
\begin{align}
    &\inf_w\left\{\sum_{\vx}\hat{p}_\mathrm{data}(\vx)\left|\mathcal{S}[w_\lambda^\ast](\vx)-\sum_{(\va,b)\in\hat{\sW}}P^{-\frac{D}{2}}w(\va,b)g((\va^{\top}\vx-b)\bmod P)\right|^2\right\}\\
    &\leqq\sum_{\vx}\hat{p}_\mathrm{data}(\vx)\left|\mathcal{S}[w_\lambda^\ast](\vx)-\sum_{(\va,b)\in\sW_{\lambda,\Delta}}P^{-\frac{D}{2}}w_\lambda^\ast(\va,b)g((\va^{\top}\vx-b)\bmod P)\right|^2\\
    &=\sum_{\vx}\hat{p}_\mathrm{data}(\vx)\left|\sum_{(\va,b)\in\sZ_P^D\times\sZ_P}P^{-\frac{D}{2}}w_\lambda^\ast(\va,b)g((\va^{\top}\vx-b)\bmod P)-\sum_{(\va,b)\in\sW_{\lambda,\Delta}}P^{-\frac{D}{2}}w_\lambda^\ast(\va,b)g((\va^{\top}\vx-b)\bmod P)\right|^2\\
    &=\sum_{\vx}\hat{p}_\mathrm{data}(\vx)\left|\sum_{(\va,b)\in\sZ_P^D\times\sZ_P\setminus\sW_{\lambda,\Delta}}P^{-\frac{D}{2}}w_\lambda^\ast(\va,b)g((\va^{\top}\vx-b)\bmod P)\right|^2\\
    &\leqq\sum_{\vx}\hat{p}_\mathrm{data}(\vx){\left(\sum_{(\va,b)\in\sZ_P^D\times\sZ_P\setminus\sW_{\lambda,\Delta}}\left|P^{-\frac{D}{2}}w_\lambda^\ast(\va,b)\right|\left|g((\va^{\top}\vx-b)\bmod P)\right|\right)}^2\\
    &\leqq\sum_{\vx}\hat{p}_\mathrm{data}(\vx){\left(\sum_{(\va,b)\in\sZ_P^D\times\sZ_P\setminus\sW_{\lambda,\Delta}}\left|P^{-\frac{D}{2}}w_\lambda^\ast(\va,b)\right|\right)}^2\\
    &={\left(\sum_{(\va,b)\in\sZ_P^D\times\sZ_P\setminus\sW_{\lambda,\Delta}}\left|P^{-\frac{D}{2}}w_\lambda^\ast(\va,b)\right|\right)}^2\\
    &\leqq{\left(\sum_{j=\lceil N_\epsilon\rceil+1}^{\infty}\alpha j^{-(1+\beta)}\right)}^2\\
    &\leqq\epsilon,
\end{align}
which yield the conclusion.
\end{proof}

\section{\label{sec:F}Detail of Numerical Experiment on Advantage of Using Quantum Ridgelet Transform}

\begin{figure}
    \centering
    \includegraphics[width=3.4in]{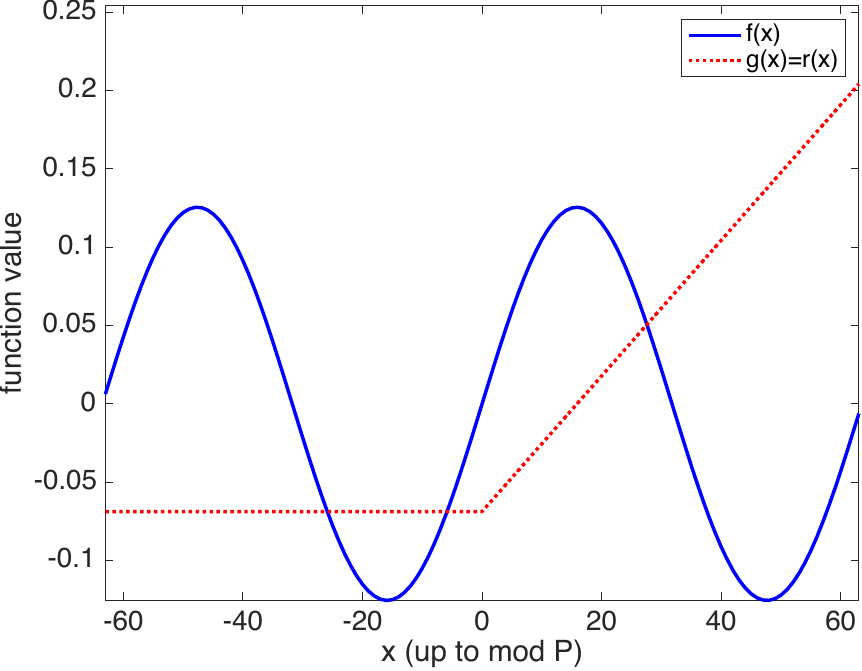}
    \caption{\label{fig:f_g_r}The function $f$ to be learned and the activation function $g$ chosen as ReLU, where the ridgelet function $r$ is also chosen as $r=g$ in our numerical experiment.}
\end{figure}

We describe the detail of the parameters chosen for the numerical experiment in Sec.~\ref{sec:advantage}.

\begin{figure}
    \centering
    \begin{minipage}[b]{0.45\linewidth}
    \centering
    \includegraphics[width=\linewidth]{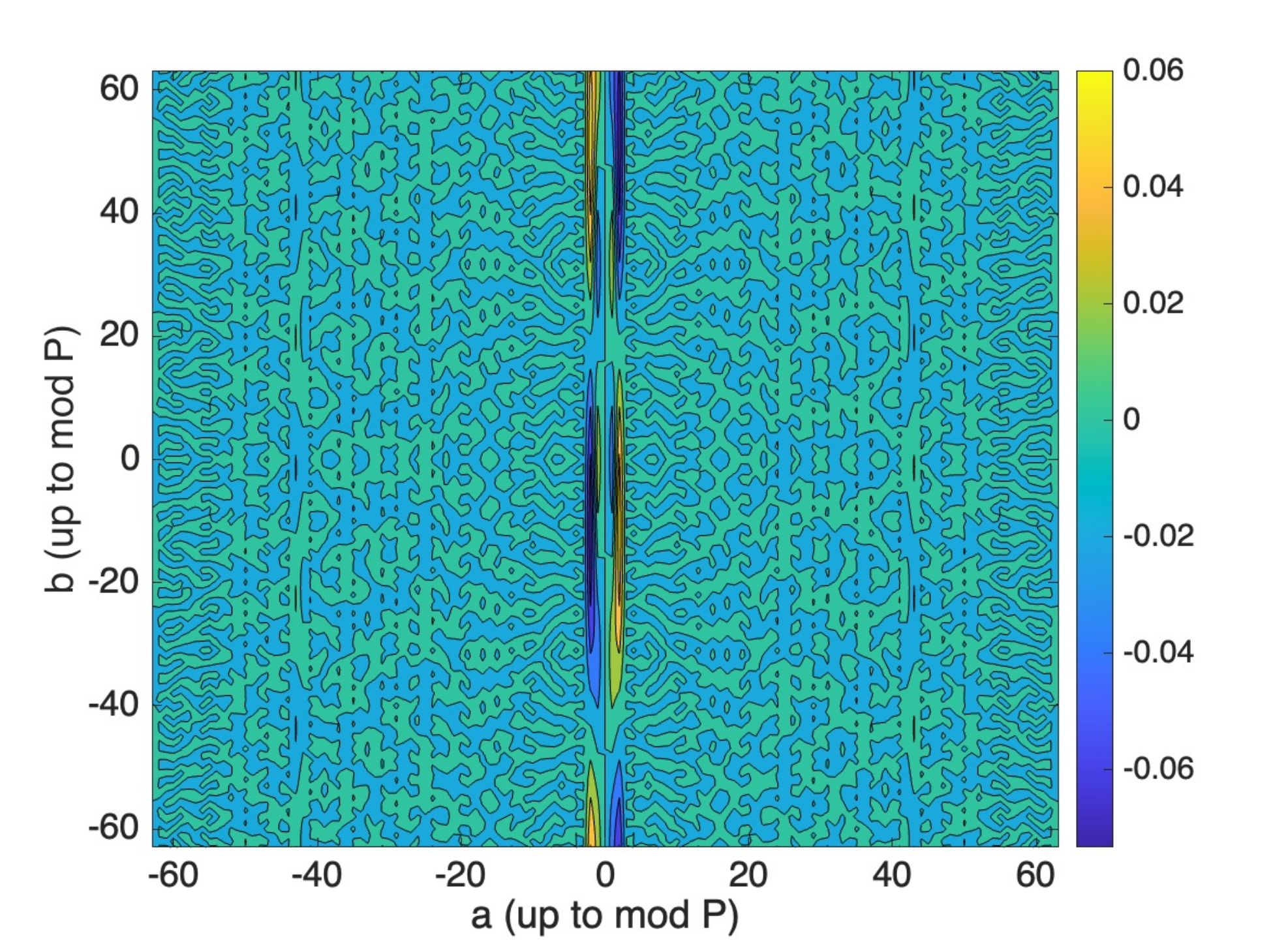}
  \end{minipage}
  \begin{minipage}[b]{0.45\linewidth}
    \centering
    \includegraphics[width=\linewidth]{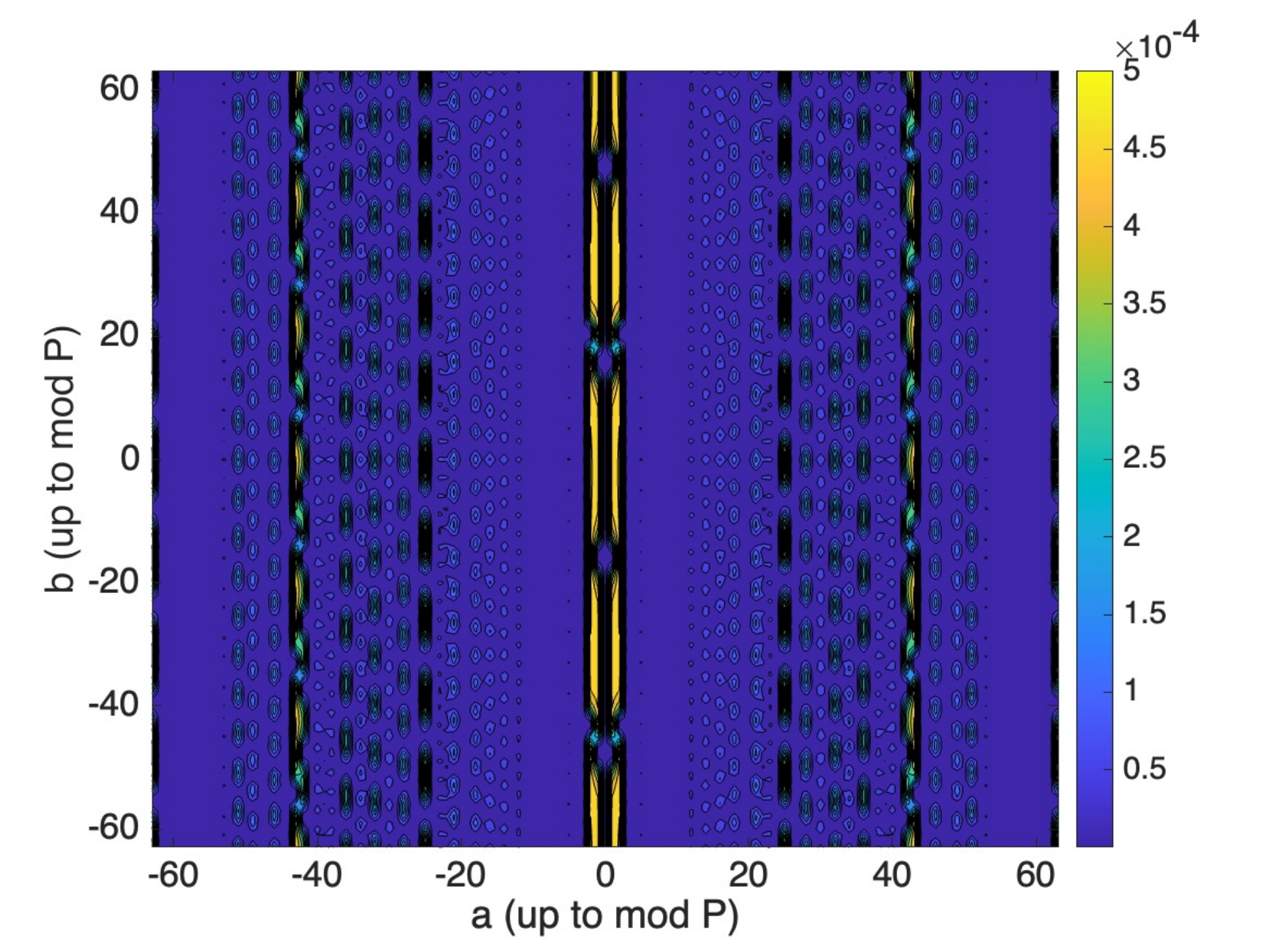}
  \end{minipage}
    \caption{\label{fig:R}The discrete ridgelet transform $\mathcal{R}[f](a,b)$ of $f$ in Fig.~\ref{fig:f_g_r} on the left, and the optimized probability distribution $p_{\lambda,\Delta}^\ast(a,b)$ on the right.}
\end{figure}

In our numerical experiment, we fixed
\begin{align}
    P=127,\\
    D=1.
\end{align}
In the current case of $D=1$, we may write $\vx$ and $\va$ as $x$ and $a$, respectively.
Up to normalization, we chose
\begin{align}
    \label{eq:f_definition}
    f(x)&\propto\sin\left(\frac{4\pi x}{P}\right),\\
    g(x)=r(x)&\propto\begin{cases}
        x\bmod P&0\leqq (x\bmod P)\leqq\frac{P-1}{2},\\
        0&\frac{P+1}{2}\leqq (x\bmod P)\leqq P-1,
    \end{cases}
\end{align}
which are shown in Fig.~\ref{fig:f_g_r} with normalization.
The empirical distribution was chosen as the uniform distribution
\begin{equation}
    \hat{p}_\mathrm{data}(x)=\frac{1}{P}.
\end{equation}
The parameters in our setting were chosen as
\begin{align}
        \epsilon&=5\times10^{-2},\\
        \lambda&=1\times10^{-4},\\
        \alpha&=4\times10^{21},\\
        \beta&=5,
\end{align}
which leads to
\begin{equation}
        \gamma\approx 9.8\times 10^{-1}.\\
\end{equation}
According to~\eqref{eq:N_Delta},~\eqref{eq:delta_epsilon}, and~\eqref{eq:N_optimal}, it suffices to set
\begin{align}
        N_\epsilon&\approx 2.0\times 10^{4},\\
        \Delta&\approx 5.5\times 10^{-5},\\
        N&\approx 5.9\times 10^{5}.
\end{align}

With the above choice of parameters, the discrete ridgelet transform $\mathcal{R}[f](a,b)$ of $f$ and the optimized probability distribution $p_{\lambda,\Delta}^\ast(a,b)$ are calculated and illustrated in Fig.~\ref{fig:R}.
In our numerical experiment, the sampling from $p_{\lambda,\Delta}^\ast$ is classically simulated by rejection sampling using the calculated value of $p_{\lambda,\Delta}^\ast(a,b)$ for each $a$ and $b$.
The uniform distribution $p_\mathrm{uniform}(a,b)=\frac{1}{P^{D+1}}$ is used for sampling the random features.
After performing the sampling $N$ times for $N=4,8,\ldots,120$, we optimize $\hat{w}^\ast(a,b)$ in~\eqref{eq:sparse_subnetwork} by minimizing the empirical risk $\sum_{x}\hat{p}_\mathrm{data}(x)|f(x)-\sum_{(a,b)\in\hat{\sW}}P^{-\frac{D}{2}}\hat{w}^\ast(a,b)g((ax-b)\bmod P)|^2$.
This convex optimization was solved by MOSEK~\citep{mosek} and YALMIP~\citep{Lofberg2004}.
The result of this numerical experiment is presented in Fig.~\ref{fig:advantage} of Sec.~\ref{sec:advantage} in the main text.

To support our results further,
we also performed numerical simulation in the case of choosing $g$ ($=r$) as a sigmoid function rather than ReLU used in Fig.~\ref{fig:advantage};
in particular, in this additional numerical simulation, with the same $f$ as~\eqref{eq:f_definition}, we chose, up to normalization,
\begin{align}
    g(x)=r(x)&\propto\tanh\left(\frac{10x}{P}\right),
\end{align}
which are shown at the top of Fig.~\ref{fig:f_g_r_tanh} with normalization.
Setting the other parameters the same as Fig.~\ref{fig:advantage}, we also present the result of this additional numerical experiment at the bottom of Fig.~\ref{fig:f_g_r_tanh}.

These numerical results show that the actual performance of empirical risk minimization in our numerical experiments can be much better than the theoretically guaranteed upper bound on $N$ in Theorem~\ref{thm:winning}, implying even more actual advantage of our algorithm than the theoretically guaranteed bound in our setting.

\begin{figure}
    \centering
    \includegraphics[width=3.4in]{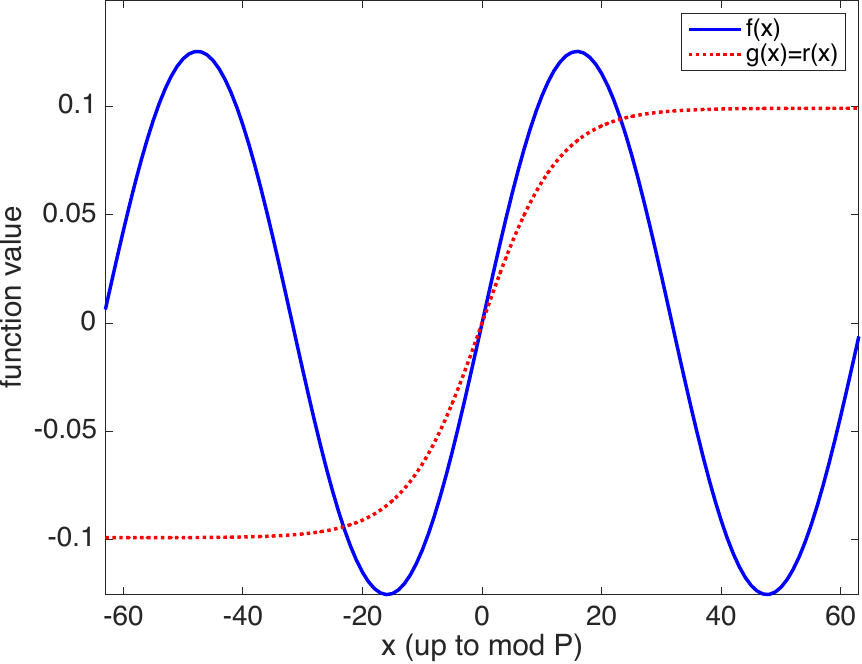}\\
    \includegraphics[width=3.4in]{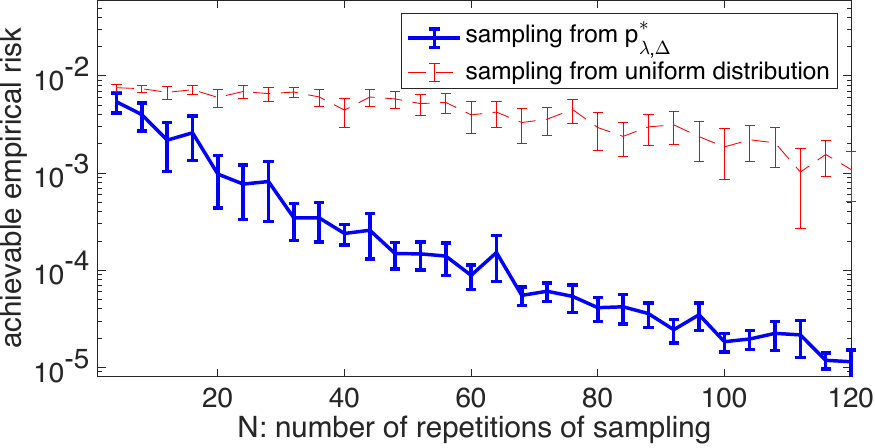}
\caption{\label{fig:f_g_r_tanh}The numerical experiment performed in addition to the one presented in Fig.~\ref{fig:advantage} of the main text. In this additional numerical experiment, as shown at the top of the figure, the function $f$ to be learned is the same as the one in Fig.~\ref{fig:f_g_r}, whereas the activation function $g$ and the ridgelet function $r$ are chosen as a sigmoid function (tanh) as shown in the figure, rather than ReLU in Fig.~\ref{fig:f_g_r}. Setting the other parameters the same as the numerical experiment in Fig.~\ref{fig:advantage}, at the bottom of the figure, we plot the empirical risks achieved by sampling from the optimized distribution via our algorithm (blue thick line) and that from the uniform distribution via random features (red dashed line), which support our results further in the same way as Fig.~\ref{fig:advantage}.}
\end{figure}

\end{document}